\tikzstyle{branch}=[fill,shape=circle,minimum size=3pt,inner sep=0pt]
\newcommand{\bigO}{\mathcal{O}}
\newcommand{\calP}{\mathcal{P}}
\newcommand{\calE}{\mathcal{E}}
\newcommand{\Prob}[2]{\mathbf{P}_{#1} \left( #2 \right)}
\newcommand{\Expec}[2]{\mathbf{E}_{#1} \left[ #2 \right]}
\newcommand{\bmin}[1]{b_{\min}\left( #1 \right)}
\newcommand{\tx}[4]{\texttt{pay}_{#4} \left( #1, #2, #3 \right)}
\newtheorem{definition}{Definition}[section]
\newtheorem{lemma}[definition]{Lemma}
\newtheorem{theorem}[definition]{Theorem}
\begin{document}

\title{Payment-failure times for random Lightning paths}

\date{}

\makeatletter
\renewcommand{\thefootnote}{\fnsymbol{footnote}} % Use *, †, ‡, ... for footnotes
\makeatother
\author{
  \begin{tabular}{c@{\extracolsep{8em}}c}
    Taki E.M. Abedesselam\footnotemark[1] & Fabio Giacomelli\footnotemark[1]\\
    \texttt{abedesselam@ing.uniroma2.it} & \texttt{fabio.giacomelli@uniroma2.it}\\[1ex]
    Francesco Pasquale\footnotemark[2] & Michele Salvi\footnotemark[2]\\
    \texttt{pasquale@mat.uniroma2.it} & \texttt{salvi@mat.uniroma2.it}\\[2ex]
  \end{tabular}
}
\footnotetext[1]{University of Rome ``Tor Vergata'' and University of Camerino}
\footnotetext[2]{University of Rome ``Tor Vergata''}

\maketitle

\makeatletter
\renewcommand{\thefootnote}{\arabic{footnote}} %restore footnotes to numeric
\makeatother

\setcounter{footnote}{0} %reset footnotes to 0

\begin{abstract}
We study a random process over graphs inspired by the way payments are executed
in the Lightning Network, the main layer-two solution on top of Bitcoin. We
first prove almost tight upper and lower bounds on the time it takes for a
payment failure to occur, as a function of the number of nodes and the edge
capacities, when the underlying graph is complete. Then, we show how such a
random process is related to the edge-betweenness centrality measure and we
prove upper and lower bounds for arbitrary graphs as a function of
edge-betweenness and capacity. Finally, we validate our theoretical results by
running extensive simulations over some classes of graphs, including snapshots
of the real Lightning Network.

\end{abstract}

\section{Introduction}\label{sec:intro}
The Lightning Network~\cite{poon2016bitcoin} is the main layer-two solution on
top of Bitcoin~\cite{nakamoto2008bitcoin} that promises to address the
scalability issue (see, e.g.,~\cite{croman2016scaling}) executing
\textit{off-chain} the vast majority of transactions, and using the blockchain
layer as a notary service to resolve controversies. It consists of a series of
cryptographic mechanisms that allow parties to build a \textit{channel graph}
and to execute \textit{payments} over paths on such a graph in a trustless way,
relying on the security of the underlying blockchain. Roughly speaking, the
channel graph is built as follows: Two parties, $u$ and $v$, can create a
\textit{channel} by locking bitcoins in a 2-of-2 multisignature address and
recording it on the blockchain, the amount of locked funds is the channel
\textit{capacity} and it is known to all the parties in the system; at any
point in time, each one of the two endpoints of a channel owns a share of such
capacity, those shares are called the channel \textit{balance}; the balance of
a channel is known only by the two endpoints and it can be updated by
exchanging private messages between them. For example, if $u$ and $v$ opened a
channel of capacity $5$, and the current balance of the channel is $4$ for $u$
and $1$ for $v$, when $u$ wants to pay $1$ to $v$, $u$ and $v$ can agree to
update the balance of their channel to $3$ for $u$ and $2$ for $v$. The parties
in the system and the channels between them constitute nodes and edges,
respectively, of the \textit{channel graph} of the Lightning Network. A node
$u$ that needs to pay a non-neighbor node $w$ can look for a path $\calP$
between $u$ and $w$ on the channel graph and request the execution of the
payment on that path; in order for the payment to be successful, the balances
of all the channels in path $\calP$ have to be updated accordingly, hence a
payment can \textit{fail} if this is not possible; in this case, we say that a
\textit{payment failure} occurs, and node $u$ has to try another path to pay
$w$. We give a slightly more rigorous description of such a process in
Section~\ref{sec:model} and we refer the reader interested in the details of
the cryptographic mechanisms that allow the routing of a payment to be
\textit{atomic} (either the balance of all the paths is updated or none is)
to~\cite{antonopoulos2022mastering}. 

\subsection{Our contribution}\label{ssec:intro:contrib}
Motivated by the above scenario, in this paper we study the following random
process: Given an undirected graph $G$, where each edge $e$ has a capacity
$c(e)$ and an initial balance equally divided between the two endpoints, at
every discrete round a source node $u$ and a destination node $v$ are chosen
uniformly at random (u.a.r.) among all the nodes, then a shortest path $\calP$
is chosen u.a.r. among all shortest paths between $u$ and $v$, and a
payment of unit value is executed from $u$ to $v$ over path $\calP$.  Our
goal is to investigate how long it takes for a payment failure to occur,
depending on the topology of the graph and on the channel capacities.  

We first analyze the case where the graph $G$ is a clique where every edge has
the same capacity $2k$. We prove that, in this case, the time until the first
failure occurs is $\Omega(k^2 n^2 / \log n)$ and $\bigO(k^2 n^2)$, with high
probability (w.h.p.).

Then, we extend the analysis to general graphs and show that the time until the
first failure occurs depends on the ratio between the squared capacity of the
edges and their \textit{edge-betweenness} \cite{girvan2002community}. More
precisely, in this case we prove that the time until the first failure occurs is
$\Omega\left(\xi \, \frac{n^2}{\log n}\right)$ and $\bigO\left(\xi \, n^2 \log
n\right)$, where $\xi$ is the minimum, over all edges $e$, of the ratio between
the squared capacity $k_e^2$ and the edge-betweenness $g(e)$.

Finally, we validate our theoretical results through extensive simulations of
the random process. We observe that the results of the simulations on the
clique graph are consistent with the theoretical bounds, and that the lower
bound is likely tight. To highlight the impact of the correlation between
edges, we simulate the process on ring graphs with $n$ nodes and we compare the
results with those that we get by considering $n$ independent birth-and-death
chains, each representing the evolution of the capacity of a single edge.
Additionally, we run simulations on a snapshot of the channel graph of the
Lightning Network. Specifically, we investigate how the overall payment failure
rate in the network would improve, if it was possible to rearrange the capacity
of the edges to maximize $\xi$.

\subsection{Related work}\label{ssec:related} The Lightning Network, first
described in~\cite{poon2016bitcoin}, is the source of several research problems
related to distributed network formation, and the notion of edge-betweenness
centrality~\cite{girvan2002community} is recurrent in most of them. For
example, in~\cite{lange2021impact} the authors look at different attachment
strategies and show how they affect the network in the long term. One of the
strategies involves adding a node using a greedy algorithm that tries to
maximize its betweenness centrality. This strategy was proposed in~\cite{MBI},
where $k$ edges are added to maximize the node’s betweenness. Alternatively, a
node can be connected to the nodes with the highest betweenness in the network.
Both strategies turned out effective in reducing the network’s diameter and
increasing the success rate of payments in the Lightning Network.
In~\cite{ersoy2020profit}, a greedy algorithm was designed to increase a node's
profit by improving its betweenness centrality.  In~\cite{Rene2021Security}, a
probabilistic model was proposed to account for uncertainty in channel balance
and to support multi-part payments—splitting a payment into parts and sending
them instead of sending it as a whole. While the proposed model does not
consider the dynamics of channel balances, it allows for specifying
service-level objectives and quantifying the amount of private information
leaked to the sender as a side effect of payment attempts.  A survey on
networking problems related to blockchain and cryptocurrencies can be found
in~\cite{dotan2021survey}.

The model we study in this paper is inspired by the Lightning Network, however
similar models have been previously studied in the context of \textit{credit
networks}~\cite{Ghosh2007Mechanism,Defigueiredo2005trustdavis} as solution
concepts for certain types of auctions. What distinguishes the design of the
Lightning Network is that it relies on cryptography to establish channels
between nodes and to allow parties to execute off-chain transactions in a
trustless way. There have also been attempts to understand this type of networks
mathematically, e.g., in~\cite{Dandekar2011liquidity} the authors study the
long-term behavior of a similar random process for rings, stars, lines, cliques,
Erdős–Rényi, and Barabási–Albert topologies. In~\cite{Goel2020ContinuousCN} the
authors provide tools for analyzing credit networks by reformulating
transactions from discrete to continuous transaction models.

In our simulations we need to repeatedly sample shortest paths between nodes,
u.a.r. among all shortest paths. In \cite{OUSPS}, several efficient algorithms
for uniformly sampling shortest paths between fixed source and target nodes
were proposed and analyzed.

\subsection{Organization of the paper}
In Section~\ref{sec:model}, we  provide the formal description of the random
process we study in this paper.
In Section~\ref{sec:clique}, we analyze the case where the graph is a clique
with uniform capacities and derive high-probability bounds on the failure time.
In Section~\ref{sec:betweenness} we extend the analysis to general graphs and
show that the failure time depends on the smallest ratio between squared
capacity and edge betweenness. In Section~\ref{sec:simulations}, we present the
simulation results on cliques, rings, and a snapshot of the channel graph of
the Lightning Network graph. Finally, in Section~\ref{sec:conclusions} we draw
some conclusions.

\section{The model and the problem}\label{sec:model}
Consider an edge-weighted undirected graph $G = (V,E)$. In the context of the
Lightning Network~\cite{poon2016bitcoin,antonopoulos2022mastering}, edges are
called \textit{channels} and edge weights $c\,:\, E \rightarrow \mathbb{N}$ are
called \textit{capacities}; throughout this paper we align with that
terminology. The capacity $c(e)$ of a channel $e = \{u,v\}$ is ``shared''
between the two endpoints $u,v$: Such a share, $\{b_e(u), b_e(v)\}$ where
$b_e(u), b_e(v) \in \left\{0, 1, \dots, c(e) \right\}$ and $b_e(u) + b_e(v) =
c(e)$, is called the channel \textit{balance}. For convenience sake, for a
channel $e = \{u,v\}$ we also define $\bmin{e} = \min \{b_e(u), b_e(v)\}$ and
we say that a channel $e$ is \textit{empty in one direction} if $\bmin{e} = 0$.

A \textit{payment} between two adjacent nodes $u$ and $v$ is an update of the
balance of edge $e = \{u,v\}$, e.g., if the current balance between $u$ and $v$
is $\{b_e(u), b_e(v)\}$ and node $u$ executes a payment $\tx{u}{v}{x}{}$ to
node $v$ of an \textit{amount} $x \in \{1, \dots, b_e(u) \}$, then the new
balance $\{b_e'(u), b_e'(v)\}$ of the edge will be $b_e'(u) = b_e(u) - x$ and
$b_e'(v) = b_e(v) + x$.

Payments can be \textit{routed} across the network: A node $u$ willing to send
an amount $x$ to a non-neighbor node $v$ can look for a path $\calP = (u = u_0,
u_1, \dots, u_h = v)$ in the graph between nodes $u$ and $v$, if it exists, and
execute a payment $\tx{u}{v}{x}{\calP}$ to node $v$ over path $\calP$ of an
amount $x$, provided that $x \leqslant b_{e_i}(u_i)$ for every channel $e_i =
\{ u_i, u_{i+1} \}$ in the path. Such a payment will update the balances of all
the channels in $\calP$ accordingly, i.e., for every channel $e_i = \{u_i,
u_{i+1}\}$ in the path the new balance $\left\{ b_{e_i}'(u_i),
b_{e_i}'(u_{i+1})\right\}$ after the payment will be\footnote{In the real
Lightning Network, for each intermediate channel there will be a small
\textit{fee} that the node forwarding the payment will subtract to the amount,
hence if $v$ needs to receive an amount $x$ then node $u$ has to send a larger
amount, say $x + \varepsilon$, where $\varepsilon$ is the total fee that will
be subtracted by the intermediate nodes, $u_1, \dots, u_{k-1}$ in the path. In
order to keep our model simple and to focus it on the main graph theoretic
problem that we want to address, we here ignore such a detail (as well as
several other details) of the actual payment routing process in the Lightning
network.} 
\[
\left\{
\begin{array}{ccl}
b_{e_i}'(u_i) & = & b_{e_i}(u_i) - x \\[2mm]
b_{e_i}'(u_{i + 1}) & = & b_{e_i}(u_{i+1}) + x
\end{array}
\right.
\]

While the graph $G = (V,E)$ and the channel capacities $c\,:\, E \rightarrow
\mathbb{N}$ are known to all the nodes in the network, the balance $\{b_e(u),
b_e(v)\}$ of a channel $e = \{u,v\}$ is known only by the endpoints $u$ and
$v$. Hence, it is possible (and it is often the case in the real Lightning
Network) that when a node $u$ tries to pay an amount $x$ to a non-neighbor node
$v$ using a path $\calP = (u = u_0, u_1, \dots, u_k = v)$, the payment cannot
be executed, due to the fact that for some channel in the path the balance is
not sufficient to accommodate the payment; this happens when there is a channel
$e_i = \left\{u_i, u_{i+1}\right\}$ in the path with balance $\left\{
b_{e_i}(u_i), b_{e_i}(u_{i+1}) \right\}$ such that $b_{e_i}(u_i) < x$. In such
a case we say that a \textit{payment failure} occurs, and node $u$ has to
choose another path to pay node $v$.

\paragraph{The random process.} To formulate a concrete mathematical problem
and theoretically analyze the impact of network topology and channel capacities
on the rate of payment failures, we here consider the following discrete-time
random process. Given an undirected connected graph $G = (V,E)$, where all
channels have the same capacity that is initially perfectly balanced between
the two endpoints, i.e., for each channel $e = \{u, v\} \in E$, we have $c(e) =
2k$ for some $k \in \mathbb{N}$ and we start with $b_e(u) = b_e(v) = k$. At
every round $t = 1,2, \dots$, we pick a source node $u \in V$ uniformly at
random (u.a.r.), a destination node $v \in V$ u.a.r., and a shortest path
$\calP$ between $u$ and $v$, u.a.r. among all the shortest paths between $u$
and $v$, and we execute a payment of amount $x = 1$ over path $\calP$ from $u$
to $v$. We are interested in the expected number of rounds before we have that
a channel $e = \{u,v\}$ exists such that $b_e(u) = 0$ or $b_e(v) = 0$.

\begin{algorithm}
\caption{The random process}\label{alg:random_process}
\begin{algorithmic}[1]
\Require An undirected connected graph $G = (V,E)$; 

\phantom{ai} Edge capacities $c(e) = 2k$ for every $e \in E$; 

\phantom{ai} Initial balance $b_e(u) = b_e(v) = k$, for
every edge $e = \{u,v\}$.

\While{$\bmin{e} > 0$ for every $e \in E$}
\State Pick a source node $u$ in $V$, u.a.r.
\State Pick a destination node $v$ in $V$, u.a.r.
\State Pick a shortest path $\calP = (u = u_0, u_1, \dots, u_h = v)$, 

\phantom{Pick} u.a.r. among all shortest paths between
$u$ and $v$;
\For{$i = 0, \dots, h-1$}
\State $b_{\{u_i, u_{i+1}\}}(u_i) = b_{\{u_i, u_{i+1}\}}(u_i) - 1$
\State $b_{\{u_i, u_{i+1}\}}(u_{i+1}) = b_{\{u_i, u_{i+1}\}}(u_{i+1}) + 1$
\EndFor
\EndWhile
\State \textbf{return} The number of iterations of the while cycle
\end{algorithmic}
\end{algorithm}

\section{Complete graph with constant capacities}\label{sec:clique}
In this section we analyze the number of iterations of the while cycle in
Algorithm~\ref{alg:random_process} as a function of the number of nodes and of
the initial capacity of the channels, when the underlying graph $G$ is a
clique. We first prove an equivalence between the random process in
Algorithm~\ref{alg:random_process} and the first hitting time of the boundary
for a family of $m = |E|$ unbiased birth-and-death chains (see the random
process in Algorithm~\ref{alg:multiple_bdc_process}).

\begin{algorithm}[h!]
\caption{Multiple Birth-and-Death chains process}\label{alg:multiple_bdc_process}
\begin{algorithmic}[1]
\Require Two integers $m, k \in \mathbb{N}$
\ForAll{$e = 1, \dots, m$}
\State Set $X(e) = 0$
\EndFor
\While{for every $e$, $X{(e)} \neq \pm k$}
\State Pick $e$, u.a.r.
\State $X(e) = 
\left\{ 
\begin{array}{l}
X(e) + 1 \quad \text{with probability } 1/2 \\
X(e) - 1 \quad \text{with probability } 1/2
\end{array}
\right.$
\EndWhile
\State \textbf{return} The number of iterations of the while cycle
\end{algorithmic}
\end{algorithm}

\begin{lemma}[Equivalence lemma]\label{lemma:equivalence}
Let $\tau_1$ and $\tau_2$ be the random variables indicating the number of
iterations of the while cycles in Algorithms~\ref{alg:random_process}
and~\ref{alg:multiple_bdc_process}, respectively.  Then, when $m$ in
Algorithm~\ref{alg:multiple_bdc_process} equals $|E|$ in
Algorithms~\ref{alg:random_process}, a coupling between the random processes
exists such that $\tau_1 = \tau_2$.
\end{lemma}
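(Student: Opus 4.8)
The plan is to exploit the fact that in a clique every pair of distinct nodes is joined by a \emph{unique} shortest path, namely the edge between them. Consequently a payment $\tx{u}{v}{1}{\calP}$ never touches more than one channel: it simply transfers one unit of balance across the single edge $\{u,v\}$, decrementing $b_{\{u,v\}}(u)$ and incrementing $b_{\{u,v\}}(v)$. This collapses the global process of Algorithm~\ref{alg:random_process} into $m = \binom{n}{2}$ decoupled balance counters, each of which moves by $\pm 1$ exactly when its edge is selected. First I would fix, once and for all, an arbitrary orientation of each edge, writing $o(e)$ for the chosen reference endpoint, and set $Y_t(e) = b_e(o(e)) - k$ for the balance after round $t$. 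Then $Y_0(e) = 0$; a payment along $e$ away from $o(e)$ sends $Y(e) \mapsto Y(e) - 1$, a payment toward $o(e)$ sends $Y(e) \mapsto Y(e) + 1$; and the termination event $\bmin{e} = 0$ (i.e.\ $b_e(o(e)) \in \{0, 2k\}$) is exactly $Y(e) \in \{-k, +k\}$. This already identifies each edge's balance with an unbiased birth-and-death chain absorbed at $\pm k$, matching the variable $X(e)$ of Algorithm~\ref{alg:multiple_bdc_process}.

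The second step is to check that the \emph{selection rule} of the two processes induces the same law. In the clique, picking a source and a distinct destination u.a.r.\ amounts to picking an ordered pair among the $n(n-1) = 2m$ ordered pairs of distinct nodes, each with probability $1/(2m)$. Each such ordered pair $(a,b)$ determines the edge $\{a,b\}$ together with a direction, and every (edge, direction) combination arises from exactly one ordered pair; hence Algorithm~\ref{alg:random_process} selects an edge uniformly among the $m$ edges and, independently, a direction uniformly in $\{+1,-1\}$. This is precisely the per-round law of Algorithm~\ref{alg:multiple_bdc_process}, which picks $e$ u.a.r.\ and then adds $\pm 1$ with probability $1/2$ each.

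With both observations in hand I would build the coupling round by round: feed Algorithm~\ref{alg:multiple_bdc_process} its random choice $(e,s) \in E \times \{+1,-1\}$, and let Algorithm~\ref{alg:random_process} use the unique ordered pair whose payment changes $Y(e)$ by the same sign $s$. A straightforward induction on $t$ then yields $Y_t(e) = X_t(e)$ for every edge $e$ and every round up to termination: the base case is $Y_0 \equiv X_0 \equiv 0$, and at each step the same edge is moved by the same increment in both processes while all other counters stay put. Since the absorbing sets $\{e : |Y_t(e)| = k\}$ and $\{e : |X_t(e)| = k\}$ therefore coincide at all times, the two while loops stop at the same round, i.e.\ $\tau_1 = \tau_2$.

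The argument is essentially bookkeeping, so I do not expect a deep obstacle; the one point that needs care is the matching of the selection laws, and in particular the treatment of the degenerate case $u = v$. A self-pair produces a length-zero path and hence an idle round, which has no counterpart among the genuine $\pm 1$ moves of Algorithm~\ref{alg:multiple_bdc_process}; to obtain the exact identity $\tau_1 = \tau_2$ one works with distinct source and destination (equivalently, idle rounds are not counted), so that the $2m$ ordered pairs map bijectively onto the (edge, direction) pairs. Keeping the orientation $o(\cdot)$ fixed throughout is what makes the sign correspondence, and thus the induction, unambiguous.
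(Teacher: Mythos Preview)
Your proof is correct and follows essentially the same approach as the paper: fix an orientation of each edge, track a centered balance variable that starts at $0$ and moves by $\pm 1$ when the edge is hit, observe that selecting an ordered pair of distinct nodes is the same as selecting a uniform edge together with a uniform sign, and couple accordingly. The paper's tracking variable is $X_t(u,v)=\tfrac12\bigl(b_e^{(t)}(v)-b_e^{(t)}(u)\bigr)$, which coincides with your $Y_t(e)=b_e(o(e))-k$ once the orientation is fixed; your explicit treatment of the degenerate case $u=v$ is a point the paper glosses over.
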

\begin{proof}
In a complete graph the edge connecting two nodes is the unique shortest path
between them. Hence, when we pick a random source, a random destination, and a
random shortest path at lines $2-4$ in Algorithm~\ref{alg:random_process}, we
are picking a single edge u.a.r. and a random direction on that edge. Moreover,
for an edge $e = \{u,v\}$, if we fix an arbitrary orientation of the edge and
consider the quantity
\begin{equation}\label{eq:edgebdchain}
X_t(u,v) = \frac{b_e^{(t)}(v) - b_{e}^{(t)}(u)}{2}
\end{equation}

We have that, at each round $t$, such quantity changes in one of three ways: it
increases by one, $X_{t+1}(u,v) = X_t(u,v) + 1$ (if $u$ is picked at line $2$
and $v$ is picked at line $3$ in~Algorithm~\ref{alg:random_process}), it
decreases by one, $X_{t+1}(u,v) = X_t(u,v) - 1$ (if $v$ is picked at line $2$
and $u$ at line $3$) or it remains the same $X_{t+1}(u,v) = X_t(u,v)$ (if the
pair $\{u,v\}$ is not picked at lines $2,3$). Hence, given the random process
defined in Algorithm~\ref{alg:random_process} we can define the random process
in Algorithm~\ref{alg:multiple_bdc_process}, with $m = |E|$, as follows. Let
$\hat{E}$ be an arbitrary orientation of the edges in $E$ (i.e., for every
$\{u,v\} \in E$ either $(u,v) \in \hat{E}$ or $(v,u) \in \hat{E}$) and let
$f\,:\, E \rightarrow [m]$ be an arbitrary bijective map. When in the random
process in Algorithm~\ref{alg:random_process} we pick a source $u$ and a
destination $v$ at lines $2$ and $3$, in the random process in
Algorithm~\ref{alg:multiple_bdc_process} we pick $f\left(\{u,v\}\right)$ at
line~$4$, and at line~$5$ we set $X(e) = X(e) + 1$ if $(u,v) \in \hat{E}$ and
$X(e) = X(e) - 1$ if $(v,u) \in \hat{E}$. By construction, this defines a
coupling of Algorithm~\ref{alg:random_process} and
Algorithm~\ref{alg:multiple_bdc_process} with $\tau_1 = \tau_2$.
\end{proof}

\begin{theorem}\label{thm:clique}
Let $G = (V,E)$ be a complete graph with $n$ nodes. For every channel $e \in
E$, let $c(e) = 2k$ be its capacity, with $k \in \mathbb{N}$ and $k > \sqrt{4
\alpha \log n}$ for some constant $\alpha > 1$, and let $b_e^{(0)}(u) =
b_e^{(0)}(v) = k$ be its initial balance. Let $\tau$ be the random variable
indicating the first time one of the channels is empty in one direction, i.e.,
$\tau = \inf\{t \in \mathbb{N} \,:\, \exists e \in E \text{ with } b_{\min}(e)
= 0 \}$. Then, 

\begin{enumerate}
\item $\tau = \bigO\left(k^2 n^2\right)$, w.h.p.\footnote{We say
that an event $\calE_n$ depending on a parameter $n$, that here indicates the
number of nodes in the graph, holds \textit{with high probability (w.h.p.)} if
a constant $c > 0$ exists such that $\Prob{}{\calE_n} \geqslant 1 - n^{-c}$,
for every sufficiently large $n$.}
\item $\tau = \Omega\left(k^2 n^2 / \log n\right)$, w.h.p.
\end{enumerate}
\end{theorem}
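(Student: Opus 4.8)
The plan is to invoke the Equivalence Lemma (Lemma~\ref{lemma:equivalence}) and work entirely with the family of $m=\binom n2$ unbiased walks of Algorithm~\ref{alg:multiple_bdc_process}: each $X(e)$ is a walk on $\{-k,\dots,k\}$ that takes a simple random walk step exactly at the rounds when $e$ is selected, and $\tau$ is the first time some walk reaches $\pm k$. Let $T_k$ denote the exit time of $(-k,k)$ of a single simple random walk from $0$, and let $N_e$ be the (random) number of rounds up to global time $t$ in which $e$ is selected, so that $\{\text{walk $e$ has exited by }t\}=\{T_k^{(e)}\le N_e\}$. The structural fact I would build on is that, conditionally on the whole selection sequence (equivalently on the vector $(N_e)_e$), these exit events are mutually independent, since distinct walks consume disjoint coin flips and walk $e$ is a simple random walk of length $N_e$. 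Both bounds then reduce to one-dimensional exit estimates: the reflection/Hoeffding bound $\Prob{}{T_k\le s}\le 4e^{-k^2/(2s)}$ for the lower bound, and a block (gambler's-ruin) bound $\Prob{}{T_k>s}\le(1-p_0)^{\lfloor s/k^2\rfloor}$ with an absolute constant $p_0>0$ for the upper bound.

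For the upper bound I would fix $t=k^2n^2$ and bound survival for \emph{every} realization of the selection. Using conditional independence and the block bound,
\[
\Prob{}{\tau>t\mid (N_e)_e}=\prod_{e}\Prob{}{T_k>N_e}\le\prod_e(1-p_0)^{N_e/k^2-1}=(1-p_0)^{\,t/k^2-m}.
\]
Because $\sum_e N_e=t$ holds deterministically, the exponent is $t/k^2-m=n^2-\binom n2\ge n^2/2$, so the right-hand side is at most $(1-p_0)^{n^2/2}=e^{-\Omega(n^2)}$, uniformly over selection sequences; hence $\Prob{}{\tau>k^2n^2}\le e^{-\Omega(n^2)}$, far stronger than the claimed w.h.p.\ statement. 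This step needs no concentration and is where I expect the least trouble; the hypothesis $k>\sqrt{4\alpha\log n}$ enters only to guarantee $k$ is large enough for $p_0$ to be a fixed constant.

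For the lower bound I would take $t=ck^2n^2/\log n$ and combine a first-moment (union) bound with the same conditional-independence picture:
\[
\Prob{}{\tau\le t}\le\sum_e\Prob{}{T_k^{(e)}\le N_e}=m\,\Prob{}{T_k\le N_1},\qquad N_1\sim\mathrm{Bin}(t,1/m).
\]
With $\mu=\Expec{}{N_1}=t/m=\Theta(k^2/\log n)$, the estimate $\Prob{}{T_k\le s}\le 4e^{-k^2/(2s)}$ gives, on the typical event $\{N_1\lesssim\mu\}$, a per-edge exit probability of order $e^{-k^2/(4\mu)}=n^{-\Theta(1/c)}$, which beats the factor $m\approx n^2/2$ once $c$ is a small enough constant. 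The main obstacle — and the genuinely delicate part of the theorem — is controlling the \emph{upper tail} of the counts $N_e$: a walk selected anomalously often has a much larger exit probability, and there are $\Theta(n^2)$ walks competing, so the union bound is dominated by these rare heavily-loaded walks. I would attack this by splitting $\Expec{}{\Prob{}{T_k\le N_1\mid N_1}}$ over dyadic scales $N_1\in(2^{j}\mu,2^{j+1}\mu]$, bounding the exit factor by $4e^{-k^2/(2^{j+2}\mu)}$ and the scale probability by a Chernoff tail $\Prob{}{N_1>2^j\mu}\le e^{-\Theta(2^j\mu\, j)}$, then summing. This is exactly where the hypothesis $k>\sqrt{4\alpha\log n}$ (making $\mu$ bounded below after $c,\alpha$ are fixed) must be used to force every dyadic term, times $m$, down to $o(1)$. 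Pinning down the worst scale and showing it still contributes $o(1/m)$ is the crux: the competition between the shrinking exit factor and the fattening step-count tail is tight, and I would budget essentially all of the technical care there.
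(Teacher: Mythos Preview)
Your upper-bound argument is correct and genuinely different from the paper's. The paper first uses Chernoff to show that at $t=4mk^2$ every chain has been selected at least $2k^2$ times with probability $\geqslant 1-m^{-\alpha+1}$, then invokes conditional independence together with Markov's inequality (via $\Expec{}{T_k}=k^2$, Lemma~\ref{apx:bdlemma}) to conclude each chain has exited with probability $\geqslant 1/2$, obtaining $\Prob{}{\tau>4mk^2}\leqslant 2^{-m}+m^{-\alpha+1}$. Your route through the deterministic identity $\sum_e N_e=t$ is slicker: it sidesteps concentration entirely and delivers $(1-p_0)^{n^2/2}$ uniformly over all selection sequences. One small correction: $p_0$ is a universal constant that does not depend on $k$ (this follows e.g.\ from Lemma~\ref{apx:bdlemma} plus Markov, using blocks of size $2k^2$ and $p_0=1/2$), so the hypothesis $k>\sqrt{4\alpha\log n}$ plays no role in this half; in the paper it enters only through the Chernoff step.

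For the lower bound your skeleton --- union bound over edges, control of the selection counts $N_e$, and a reflection/Hoeffding bound on $T_k$ --- is exactly the paper's, but the paper does \emph{not} carry out a dyadic decomposition of $N_1$. It fixes a single threshold $s=k^2/(18\log n)$, uses one Chernoff bound plus a union over the $m$ edges to show that the event $\calE=\{\max_e N_e\leqslant s\}$ holds w.h.p., and then on $\calE$ applies Lemma~\ref{lemma:lbhitlemma} uniformly to get $\Prob{}{\tau_e\leqslant t\mid\calE}\leqslant n^{-3}$ for every $e$, finishing with a plain union bound. So the ``heavily loaded walks'' you flag as the crux are disposed of in one stroke by conditioning on $\calE$ rather than by summing contributions over scales. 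Your dyadic plan --- integrating the upper tail of $N_1$ against the exit probability --- is strictly more work than this single-threshold cut, and the trade-off analysis you sketch (exit factor $e^{-k^2/(2^{j+2}\mu)}$ versus Chernoff tail $e^{-\Theta(2^j\mu j)}$) is exactly what the single threshold avoids: once you throw the whole upper tail of $N_e$ into $\Prob{}{\overline{\calE}}$, only the $j=0$ term survives and the competition disappears.
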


\begin{proof}
Due to the equivalence proved in Lemma~\ref{lemma:equivalence} we can study the
distribution of the number of iterations of the while cycle in
Algorithm~\ref{alg:multiple_bdc_process}: let $\tau$ be the random variable
indicating such number of iterations. Observe that $\tau = \min\{ \tau^{(e)}
\,:\, e = 1, \dots, m \}$ where $\tau^{(e)} = \inf\{t \in \mathbb{N} \,:\,
X_t^{(e)} = \pm k \}$ and here we indicate with $X_t^{(e)}$ the value that
variable $X(e)$ at line $5$ of Algorithm~\ref{alg:multiple_bdc_process} has at
the $t$-th iteration of the while cycle.

For $e = 1, \dots, m$ and for $t = 1, 2, \dots$, let $Y_t(e)$ be the indicator
random variable of the event ``Edge $e$ is chosen at round $t$'' in
Algorithm\ref{alg:multiple_bdc_process}, and let 
\[
\overline{Y}_t(e) = \sum_{i=1}^t Y_i(e)
\]
Since $\Prob{}{Y_i(e)} = 1/m$ for every $e$ and every $i$, it holds that
$\Expec{}{\overline{Y}_t(e)} = t / m$, and observe that for every fixed $e$
random variables $\{Y_i(e) \,:\, i = 1, 2, \dots  \}$ are independent.

\smallskip As for the upper bound, from Chernoff Bound (see
Lemma~\ref{apx:chernoff} in the Appendix~\ref{sec:apx:concentration}) it
follows that, for every $e = 1, \dots, m$
\[
\Prob{}{\overline{Y}_t(e) \leqslant \frac{t}{2m}} \leqslant e^{-\frac{t}{8m}}\,.
\]
Hence, if we chose $t = 4mk^2$, we have that for every $e = 1, \dots, m$, the
probability that $e$ has been chosen less than $2 k^2$ times is at most
\[
\Prob{}{\overline{Y}_t(e) \leqslant 2 k^2} \leqslant e^{-k^2 / 2}\,.
\]
and by using the union bound, the probability that an $e$ exists that has been
chosen less than $2 k^2$ times is
\begin{equation}\label{eq:badevent}
\Prob{}{\exists e \,:\, \overline{Y}_t(e) \leqslant 2 k^2} \leqslant m
e^{-k^2 / 2}\,.
\end{equation}
Thus, if $k > \sqrt{2 \alpha \log n}$ for some $\alpha > 1$, we have that with
probability at least $1-m^{-\alpha+1}$ at round $t = 4mk^2$ all $e \in \{1, \dots,
m\}$ have been chosen at least $2 k^2$ times.

For one single unbiased birth-and-death chain it is well-known that (see
Lemma \ref{apx:bdlemma} in Appendix \ref{sec:apx:bdchain}), if the chain starts
at $0$ then it will hit the boundary after $k^2$ steps, in expectation. From
Markov inequality it thus follows that the probability that a single chain does
not hit the boundary within $2 k^2$ steps is at most $1/2$.

Let $\calE$ be the event $\calE =$ ``At round $t = 4 m k^2$ rounds all $e \in
[m]$ have been chosen at least $2 k^2$ times''. From~\eqref{eq:badevent} we
have that $\Prob{}{\calE} = 1 - m^{-\alpha + 1}$. Moreover, observe that
conditionally on event $\calE$, events $\{\tau(e) > 4mk^2 \,:\, e \in [m] \}$
are independent and each one of them has probability at most $1/2$ (since it is
the probability that a chain $e$ that has been chosen at least $2k^2$ times has
not reached the boundary yet). Hence,
\begin{align*}
& \Prob{}{\tau > 4mk^2} = \Prob{}{\tau > 4mk^2 \;|\; \calE} \Prob{}{\calE} + 
\Prob{}{\tau > 4mk^2 \;|\; \overline{\calE}} \Prob{}{\overline{\calE}} \\ 
& \leqslant \Prob{}{\tau > 4mk^2 \;|\; \calE} + \Prob{}{\overline{\calE}} \leqslant \Prob{}{\text{for all } e \in [m],\; \tau(e) > 4mk^2 \;|\; \calE} + m^{-\alpha + 1} \\
& \leqslant 2^{-m} + m^{-\alpha+1}
\end{align*}

\medskip As for the lower bound, from Chernoff bound (see
Lemma~\ref{apx:chernoff} in the Appendix) it follows that, for every $e = 1,
\dots, m$
\[
\Prob{}{\overline{Y}_t(e) \geqslant \frac{3t}{2m}} \leqslant e^{-\frac{9t}{4m}}\,.
\]
Hence, if we chose $t = m k^2 / (27 \log n)$, we have that for any $e = 1,
\dots, m$, the probability that $e$ has been chosen at least $k^2 / (18 \log
n$) times is at most
\[
\Prob{}{\overline{Y}_t(e) \geqslant k^2 / (18 \log n)} \leqslant e^{-k^2 / 12}\,.
\]
and by using the union bound, the probability that an $e$ exists that has been
chosen at least $k^2 / (18 \log n)$ times is
\begin{equation}\label{eq:badeventlb}
\Prob{}{\exists e \,:\, \overline{Y}_t(e)
\geqslant k^2 / (18 \log n)} \leqslant m e^{-k^2 / 12}\,.
\end{equation}
Thus, if $k > \sqrt{\alpha \log n}$ for some large enough constant $\alpha$, we
have that with probability at least $1-n^{-1}$ at round $t = m k^2 / (27 \log n)$
all $e \in \{1, \dots, m\}$ have been chosen at most $k^2 / (18 \log n)$ times.

For one single unbiased birth-and-death chain, if the chain starts at $0$ then
the probability that it hits the boundary at $\pm k$ within $k^2 / (18 \log n)$
steps is smaller than $1/n^3$ (see Lemma~\ref{lemma:lbhitlemma} in
Appendix~\ref{sec:apx:bdchain}). Hence, if we define event $\calE =$ ``At round
$t = m k^2 / (27 \log n)$ all $e \in [m]$ have been chosen at most $k^2 / (18
\log n)$ times'', then conditional on event $\calE$ we have that 
\begin{align*}
& \Prob{}{\tau \leqslant \frac{m k^2}{27 \log n}}
\leqslant \Prob{}{\left. \tau \leqslant \frac{m k^2}{27 \log n}\; \right| \; \calE} 
+ \Prob{}{\overline{\calE}}\\
& = \Prob{}{\left. \exists e \in \{ 1, \dots, m\} \,:\,\tau(e) \leqslant \frac{m k^2}{27 \log n} \right| \; \calE}
+ \Prob{}{\overline{\calE}}\\
& \leqslant m \Prob{}{\left. \tau(e) \leqslant \frac{m k^2}{27 \log n} \right| \; \calE} + 1 / n \\
& \leqslant m / n^3 + 1/n \leqslant 2/n \,.
\end{align*}

\end{proof}

\section{Edge-betweenness and failure time}\label{sec:betweenness}
In Section~\ref{sec:clique} we have seen that, when the underlying graph is
complete, the process described in Algorithm~\ref{alg:random_process} can be
coupled on the same probability space with a process involving the hitting time
of the boundary for a family of independent birth-and-death chains
(Algorithm~\ref{alg:multiple_bdc_process}). That is possible since any
shortest-path in a clique consists in a single edge, hence at each round the
balance of one single edge is updated and the updates of the balances of
different edges in different rounds are independent. In order to generalize the
results obtained for complete graph and constant capacities to the case of
different graph topologies and to the case in which edges can have different
initial capacities, the main technical difficulty stands in the fact that the
updates of the balances of edges that belong to the path chosen in a specific
round (line~4 in Algorithm~\ref{alg:random_process}) are not independent.
However, we can still give upper and lower bounds on the time it takes to have
the first payment failure, by losing only an extra $\bigO(\log n)$ factor
in the upper bound due to the dependence of the edges, and by using the
\textit{edge-betweenness centrality} as a parameter, in
Theorem~\ref{thm:beetweenness} we indeed prove that the critical quantity in
the estimation of the first failure time is the minimum of the ratios $k_e^2 /
g(e)$, where $k_e$ is the capacity of edge $e$ and $g(e)$ is its betweenness
centrality in graph $G$.

Edge betweenness is a commonly used centrality measure in network
analysis~\cite{girvan2002community}. It quantifies how often an edge lies in
shortest paths between pairs of nodes in a graph. Formally, the edge
betweenness centrality $g(e)$ of an edge $e$ is defined as
\begin{align}
g(e) =\sum_{s,t \in V} \frac{\sigma(s, t \, |\, e)}{\sigma(s, t)}
\end{align}
where $V$ is the set of nodes, $\sigma(s, t)$ is the number of shortest $(s,
t)$-paths, and $\sigma(s, t \,|\, e)$ is the number of those paths containing
edge $e$.

In our random process (Algorithm~\ref{alg:random_process}), the probability
that an edge $e$ is contained in the shortest path selected at any specific
iteration is, by the law of total probability,
\begin{align}
\Prob{}{e} = \sum_{s,t \in V} \Prob{}{e \,|\, s, t} \cdot \Prob{}{s, t}
\end{align}
where $\Prob{}{e \,|\, s, t}$ is the probability of picking a shortest path
containing $e$ at line $4$ in Algorithm~\ref{alg:random_process} conditional on
the fact that $s$ and $t$ were selected as source and destination at lines $2$
and $3$, and $\Prob{}{s, t}$ is the probability of selecting $s$ and $t$ as
source and destination of the path.  Notice that the probability of picking $e$
given $s$ and $t$ is exactly $\frac{\sigma(s, t \,|\, e)}{\sigma(s, t)}$ and
$\Prob{}{s, t} = \frac{2}{n(n-1)}$, since we select the pair of nodes uniformly
at random. Thus, $\Prob{}{e}$ is the normalized edge-betweenness of edge $e$
\begin{equation}\label{eq:probasbetwenness}
\Prob{}{e} = \frac{2}{n(n-1)} g(e)
\end{equation}

\begin{theorem}\label{thm:beetweenness}
Let $G = (V,E)$ be a graph with $n$ nodes. For every channel $e \in E$, let
$c(e) = 2k_e$ be its capacity, with $k_e \in \mathbb{N}$ and $k_e > \alpha
\sqrt{\log n}$ for a sufficiently large constant $\alpha$, and let
$b_e^{(0)}(u) = b_e^{(0)}(v) = k_e$ be its initial balance. Let $\tau$ be the
random variable indicating the first time one of the channels is empty in one
direction, i.e., $\tau = \inf\{t \in \mathbb{N} \,:\, \exists e \in E \text{
with } b_{\min}(e) = 0 \}$. Then, 
\begin{enumerate}
\item $\tau = \bigO\left(\xi \, n^2 \log n\right)$, w.h.p.
\item $\tau = \Omega\left(\xi \, \frac{n^2}{\log n}\right)$, w.h.p.
\end{enumerate}
where $\xi = \min \left\{ \frac{k_e^2}{g(e)} \,:\, e \in E\right\}$ and $g(e)$
is the betweenness centrality of edge $e$.
\end{theorem}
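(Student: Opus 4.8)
The plan is to carry over the per-edge viewpoint of Section~\ref{sec:clique}: for each edge $e$ I track the one-dimensional quantity $X_t(e)$ of~\eqref{eq:edgebdchain}, write $\tau=\min_{e\in E}\tau(e)$ with $\tau(e)=\inf\{t:\,X_t(e)=\pm k_e\}$, and analyse each $\tau(e)$ through the single birth-and-death chain lemmas. The point that makes this possible \emph{without} independence across edges is that the marginal law of one fixed edge is still a clean lazy symmetric walk. Indeed, at each round $e$ is selected with probability $\Prob{}{e}=\tfrac{2}{n(n-1)}g(e)$ by~\eqref{eq:probasbetwenness}, and these events are independent across rounds since every round resamples source, destination, and path from scratch; moreover, conditioned on $e=\{u,w\}$ being selected, it is traversed $u\to w$ or $w\to u$ with equal probability, because the ordered pairs $(s,t)$ and $(t,s)$ are equiprobable and path-reversal is a bijection between shortest $(s,t)$-paths and shortest $(t,s)$-paths. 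Hence $\overline{Y}_t(e)\sim\mathrm{Binomial}(t,\Prob{}{e})$ and, restricted to the rounds in which it is selected, $X_\cdot(e)$ is a simple symmetric random walk on $\{-k_e,\dots,k_e\}$; only the ``clock rate'' $\Prob{}{e}$ and the capacity $k_e$ now depend on $e$, and Lemmas~\ref{apx:bdlemma} and~\ref{lemma:lbhitlemma} apply edge by edge.

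\textbf{Upper bound.} I would fix a minimizing edge $e^\star$, so that $k_{e^\star}^2/g(e^\star)=\xi$ and $\Prob{}{e^\star}=\tfrac{2k_{e^\star}^2}{\xi\,n(n-1)}$, and bound $\tau\le\tau(e^\star)$. Taking $t=C\,\xi\,n^2\log n$ gives $\Expec{}{\overline{Y}_t(e^\star)}=t\,\Prob{}{e^\star}=\Theta(k_{e^\star}^2\log n)=\Omega(\log^2 n)$; since this mean is large, a Chernoff bound (Lemma~\ref{apx:chernoff}) yields $\overline{Y}_t(e^\star)\ge\tfrac12 t\,\Prob{}{e^\star}\ge C'k_{e^\star}^2\log n$ w.h.p. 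It then suffices that a chain that has moved at least $C'k_{e^\star}^2\log n$ times has been absorbed at $\pm k_{e^\star}$ w.h.p., which I would derive from Lemma~\ref{apx:bdlemma}: the expected absorption time from any interior state is at most $k_{e^\star}^2$, so by Markov a block of $2k_{e^\star}^2$ moves fails to absorb with probability at most $1/2$, and iterating the Markov property over $\Theta(\log n)$ disjoint blocks drives the survival probability below $n^{-c}$. This gives $\tau=\bigO(\xi\,n^2\log n)$ w.h.p. The extra $\log n$ relative to Theorem~\ref{thm:clique} is exactly the cost of this amplification: in a clique one has $m$ genuinely independent chains and may take a $2^{-m}$ union over them, whereas here we must wait for a single designated edge to be absorbed with high probability.

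\textbf{Lower bound.} Here the inter-edge correlation is harmless, because a union bound needs no independence. With $t=c\,\xi\,n^2/\log n$ and the inequality $g(e)\le k_e^2/\xi$ valid for every $e$, the expected selection count is $\Expec{}{\overline{Y}_t(e)}=t\,\tfrac{2g(e)}{n(n-1)}\le\tfrac{2c\,k_e^2}{\log n}$. A Chernoff bound then bounds $\overline{Y}_t(e)$ by $k_e^2/(c''\log n)$, and Lemma~\ref{lemma:lbhitlemma} shows that a chain taking at most $k_e^2/(c''\log n)$ steps reaches $\pm k_e$ with probability at most $2n^{-c''/2}$. A union bound over the $m\le n^2$ edges gives $\Prob{}{\tau\le t}\le 2n^{2-c''/2}=o(1)$ for $c''$ large, i.e. $\tau=\Omega(\xi\,n^2/\log n)$ w.h.p.

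\textbf{The main obstacle.} The delicate step is the concentration in the lower bound for edges of small capacity. For the union bound to close, each per-edge failure probability must be $o(1/m)=o(n^{-2})$, and this requires $\overline{Y}_t(e)$ to stay below $\sim k_e$ (fewer selections than the distance to the boundary) with the diffusive estimate of Lemma~\ref{lemma:lbhitlemma} in force, i.e. $\Expec{}{\overline{Y}_t(e)}=\Theta(k_e^2/\log n)\gtrsim k_e$, which forces $k_e=\Omega(\log n)$. Under the weaker hypothesis $k_e=\Theta(\sqrt{\log n})$ one has $\Expec{}{\overline{Y}_t(e)}=\Theta(1)$ while $k_e=\Theta(\sqrt{\log n})$, so a low-capacity high-betweenness edge can already fail through roughly $k_e$ almost-aligned selections, an event of probability $\exp(-\Theta(\sqrt{\log n}\log\log n))=n^{-o(1)}$ that is not summable over $\Theta(n^2)$ edges; the same gap is latent in Theorem~\ref{thm:clique}. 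I would therefore either strengthen the capacity assumption to $k_e=\Omega(\log n)$, under which every per-edge deviation probability becomes $n^{-\Omega(1)}$ and the argument above is rigorous, or replace the two-step split by a direct estimate of $\sum_{N\ge k_e}\Prob{}{\overline{Y}_t(e)=N}\,\Prob{}{\text{the walk hits }\pm k_e\text{ within }N\text{ steps}}$, optimizing the cut-off $N$ to balance the Poisson tail of the selection count against the walk's hitting probability.
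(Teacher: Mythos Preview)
Your approach is essentially the paper's: track the per-edge selection count $\overline{Y}_t(e)$ via Chernoff and feed it into Lemmas~\ref{apx:bdlemma} and~\ref{lemma:lbhitlemma}, amplifying the upper bound by iterating Markov over $\Theta(\log n)$ blocks on the minimizing edge, and closing the lower bound by a union bound over all edges. Your ``main obstacle'' is well spotted and in fact applies to the paper's own proof: the Chernoff step there, $e^{-k_e^2/(324\log n)}\le n^{-2}$, actually requires $k_e\gtrsim\log n$ rather than the stated $k_e\gtrsim\sqrt{\log n}$, so the hypothesis in the theorem is too weak for the paper's argument as written as well.
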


\begin{proof}
Let $e$ be an edge, let $Y_t(e)$ be the indicator random variable of the event
``Edge $e$ is included in the path chosen at round $t$'', and let 
\[
\overline{Y}_t(e) = \sum_{i=1}^t Y_i(e)
\]
From~\eqref{eq:probasbetwenness} we have that $\Prob{}{Y_i(e) = 1} =
\frac{2}{n(n-1)} g(e)$ for every $i$ and thus the expected number of rounds
edge $e$ has been included in payment paths up to round $t$ is
\begin{equation}\label{eq:edgepathincl}
\Expec{}{\overline{Y}_t(e)} = t \cdot \frac{2}{n(n-1)} g(e)\,. 
\end{equation}
Let $\tau_e$ be the random variable indicating the first time edge $e$ is empty
in one direction
\[
\tau_e = \inf\left\{t \in \mathbb{N} \,:\, \bmin{e} = 0 \right\}
\]
Notice that, restricted only to the rounds in which edge $e$ is included in a
path, the balance of edge $e$ behaves according to an unbiased birth-and-death
chain. 

\smallskip As for the upper bound, for an edge $e$ with capacity $k_e$, the
expected time to become empty in one direction is at most $k_e^2$ rounds in
which edge $e$ is included in the path chosen at that round, regardless of the
initial balance of the edge (see Lemma~\ref{apx:bdlemma} in
Appendix~\ref{sec:apx:bdchain}). From Markov inequality it thus follows that,
if at round $t$ an edge $e$ updated its balance more than $4k_e^2$ rounds, then
the probability that it is not yet empty in one direction is at most $1/4$,
\[
\Prob{}{\tau_e > t \;|\; \overline{Y}_t(e) > 4 k_e^2 } \leqslant \frac{1}{4}
\]
Hence, if we take 
\[
\bar{t}_e = 4 n(n-1) \frac{k_e^2}{g(e)}
\]
from~\eqref{eq:edgepathincl} we have that
$\Expec{}{\overline{Y}_{\bar{t}_e}(e)} = 8 k_e^2$ and, since $\{Y_i(e) \,:\, i
= 1, \dots, \bar{t}_e \}$ are independent, from Chernoff bound it follows that
\[
\Prob{}{\overline{Y}_{\bar{t}_e}(e) \leqslant 4 k_e^2} 
\leqslant e^{-(1/2) n (n-1) \left(k_e^2 / g(e) \right)}
\leqslant \frac{1}{4}
\]
Hence
\begin{align*}
  \lefteqn{ \Prob{}{\tau_e > \bar{t}_e} }  \\
  &= \Prob{}{\tau_e > \bar{t}_e \mid \overline{Y}_{\bar{t}_e}(e) > 4 k_e^2 } \,
     \Prob{}{\overline{Y}_{\bar{t}_e}(e) > 4 k_e^2 } 
     + \Prob{}{\tau_e > \bar{t}_e \mid \overline{Y}_{\bar{t}_e}(e) \leqslant 4 k_e^2 } \,
     \Prob{}{\overline{Y}_{\bar{t}_e}(e) \leqslant 4 k_e^2 } \\
  &\leqslant \Prob{}{\tau_e > \bar{t}_e \mid \overline{Y}_{\bar{t}_e}(e) > 4 k_e^2 }
     + \Prob{}{\overline{Y}_{\bar{t}_e}(e) \leqslant 4 k_e^2 } 
   \leqslant \frac{1}{2} 
\end{align*}
Since the above bound holds regardless of the initial balance of edge $e$, 
it follows that
\[
\Prob{}{\tau_e > \bar{t}_e \log n} 
\leqslant \left(\frac{1}{2}\right)^{\log n} = \frac{1}{n}
\]
Hence, for every edge $e \in E$, w.h.p. it holds that
\[
\tau_e \leqslant 4 n(n-1) \frac{k_e^2}{g(e)} \log n 
= \bigO\left(n^2 \log n  \frac{k_e^2}{g(e)} \right)
\]
The upper bound follows considering the edge(s) $e$ with the smallest ratio
$k_e^2 / g(e)$.

\smallskip As for the lower bound, we can proceed as in the proof of
Theorem~\ref{thm:clique}. For an edge $e$ with capacity $2k_e$, if $e$ starts
from a perfectly balanced configuration, from Lemma~\ref{lemma:lbhitlemma} in
Appendix~\ref{sec:apx:bdchain} it follows that, at any time $t$,
\begin{equation}\label{eq:condtauelb}
\Prob{}{\tau_e \leqslant t \;\left|\; \overline{Y_t}(e) 
\leqslant \frac{k_e^2}{18 \log n} \right.} 
\leqslant 4 e^{\frac{-k_e^2}{k_e^2 / (3 \log n)}} = \frac{4}{n^3}
\end{equation}
If we take
\[
\bar{t}_e = \frac{n(n-1)}{54 \log n} \cdot \frac{k_e^2}{g(e)}
\]
from~\eqref{eq:edgepathincl} we have that 
\[
\Expec{}{\overline{Y}_{\bar{t}_e}(e)} = \frac{k_e^2}{27 \log n}
\]
and from Chernoff bound (Lemma~\ref{apx:chernoff} in
Appendix~\ref{sec:apx:concentration} with $\delta = 1/2$) 
\begin{equation}\label{eq:badeventlbbetwenn}
\Prob{}{\overline{Y}_{\bar{t}_e}(e) \geqslant \frac{k_e^2}{18 \log n}} 
\leqslant e^{-\frac{1}{12} \frac{k_e^2}{27 \log n}} 
\leqslant \frac{1}{n^2}
\end{equation}
where in the last inequality we used the hypothesis $k_e > \alpha \sqrt{\log n}$
for a sufficiently large constant $\alpha$.

Now let $\bar{t}$ be 
\begin{equation}\label{eq:targetlbbetweenness}
\bar{t} 
= \frac{n(n-1)}{54 \log n} \cdot \min\left\{\frac{k_e^2}{g(e)} \,:\, e \in E \right\}
\end{equation}
and let $\calE$ be the event $\calE =$ ``At round $\bar{t}$ for each edge $e
\in E$ it holds that $\overline{Y}_{\bar{t}}(e) < \frac{k_e^2}{18 \log n}$''.
Since $\bar{t} \leqslant \bar{t}_e$ for every $e \in E$,
from~\eqref{eq:badeventlbbetwenn} it follows that
$\Prob{}{\overline{Y}_{\bar{t}}(e) \geqslant \frac{k_e^2}{18 \log n}} \leqslant
1/n^2$. Hence,
\begin{equation}\label{eq:bet:globalbadeventlb}
\Prob{}{\overline{\calE}} 
= \Prob{}{\exists e \in E \;:\; \overline{Y}_{\bar{t}}(e) \geqslant \frac{k_e^2}{18 \log n}}
\leqslant \frac{1}{n}
\end{equation}
The first time an edge is empty in one direction is $\tau = \min\{\tau_e \,:\;
e \in E\}$. For $\bar{t}$ defined as in \eqref{eq:targetlbbetweenness} we thus
have
\begin{align*}
\Prob{}{\tau < \bar{t}}
& \leqslant \Prob{}{\tau < \bar{t} \;\left|\; \calE \right.} 
+ \Prob{}{\overline{\calE}} \leqslant \Prob{}{\exists e \in E \,:\, \tau_e < \bar{t} \;\left|\; \calE \right.}
+ \Prob{}{\overline{\calE}} \\ 
& \leqslant \Prob{}{\exists e \in E \,:\, \tau_e < \bar{t}_e \;\left|\; \calE \right.}
+ \Prob{}{\overline{\calE}} \leqslant \frac{4}{n} + \frac{1}{n}
\end{align*}
where in the last inequality we used the bounds in~\eqref{eq:condtauelb} and
in~\eqref{eq:bet:globalbadeventlb}.
\end{proof}

Theorem~\ref{thm:beetweenness} generalizes the result on the complete graph of
Theorem~\ref{thm:clique} with respect to both graph topology and channel
capacity. Notice that, when all edge capacities are equal, $k_e = k$ for all $e
\in E$, then upper and lower bounds in Theorem~\ref{thm:beetweenness} turn out
to be, w.h.p., 
\[
\tau = \left\{
\begin{array}{l}
\Omega\left(\frac{k^2}{g_{\max}} n^2 \log n \right) \\[2mm]
\mathcal{O}\left( \frac{k^2}{g_{\max}} \cdot \frac{n^2}{\log n} \right)
\end{array}
\right.
\]
where with $g_{\max} = \max\{g(e) \,:\, e \in E \}$ is the largest edge
betweenness centrality. The above observation indicates that, despite the 
fact that the updates of the edges in general graphs are not independent,
the impact of such correlation can only be limited to a $\log n$ factor,
as a function of $n$. For fixed $n$ and as a function of the edge capacity 
$k$ the impact of the correlation is limited to a constant factor. In the
next section we will show the results obtained with some simulations that
highlight such constant-factor impact. 

\section{Simulations}\label{sec:simulations}
In this section, we present the results of the simulations of the random
process in Algorithm~\ref{alg:random_process} on different classes of graphs.

We begin with the case of the complete graph (see
Subsection~\ref{ssec:sim:complete}), where the goal is to compare the results
of the simulations with the theoretical upper and lower bounds in
Section~\ref{sec:clique}. As we will see in Figure~\ref{fig:Clique}, the
results of the simulations give some evidence that the lower bound in
Theorem~\ref{thm:clique} could be tight.

To highlight the impact of the correlation between the updates of the balances
of different edges, we then proceed with a comparison of the results of the
simulations obtained in a graph with $n$ nodes and a ring topology with the
results that we get on an unstructured set of $n$ edges (see
Subsection~\ref{ssec:sim:ring}): In a graph with a ring topology, every time we
pick a source and a destination uniformly at random and we pick the shortest
path between them to route the payment, for any fixed edge $e$ the probability
that $e$ is included in the path is about $1/4$, but edges are correlated
(two edges that are close in the ring topology have larger probability to be
updated together than two edges that are far apart); for comparison, we use an
unstructured set of $n$ edges, where at each round we pick each edge with the
same probability, about $1/4$, independently of the other edges, and we
independently update the balances of all the picked edges. 

Finally, to highlight the impact of the distribution of the channel capacities
in the Lightning Network, we run the simulation of our random process on a
snapshot of the channel graph of the real Lightning Network (see
Subsection~\ref{ssec:sim:lnd}): We first consider each edge having its real
channel capacity, as observed in the Lightning Network; then, we run the
simulations on a graph whose topological structure is the same but the
capacities of the channels are suitably redistributed, in order to quantify the
improvement that we would get, with respect to the time to have the first
payment failure in our model, if it was possible to modify the channel
capacities.

The code for all simulations have been written in C++. The simulations were
conducted on a system equipped with an AMD Ryzen 5 CPU (3.9 GHz) and $32$ GB of
RAM. The simulations in Subsection~\ref{ssec:sim:lnd} refer to a snapshot of
the Lightning Network collected on 2025/04/23 using our Lightning Network node,
that runs an instance of LND~\cite{lnd_github}.

\subsection{Complete graph}\label{ssec:sim:complete}
In this subsection we present the results of the simulations of the random
process on the complete graph. In Figure~\ref{fig:Clique} we show the results
that we get when we fix the capacity of each edge at $512$ and we increase the
number of nodes $n$ up to $2800$; for each $n$, we run the simulation $10$
times. The plots in the figure show average, maximum and minimum failure time,
for each value of $n$, over the $10$ runs. To compare the results of the
simulations with the theoretical results in Theorem~\ref{thm:clique} we also
plot the lines that we get for the theoretical upper and lower bounds, where
the constants hidden in the asymptotic notation is chosen using nonlinear least
squares fitting~\cite{Lain1978Lsquares}, i.e., minimizing the squared error
with the simulated average. Namely, for the upper bound we plot the function
$\tau = p \cdot (512)^2 n^2$, with $p = 0.0044$, and for the lower bound we
plot $\tau = p \cdot \frac{(512)^2 n^2}{\log(N)}$, with $p = 29.0202$.  The
growth rate of the average of the simulations seems much closer to the growth
rate of the theoretical lower bound than that of the upper bound.

\begin{figure}[ht]
    \centering
    \includegraphics[width=\linewidth]{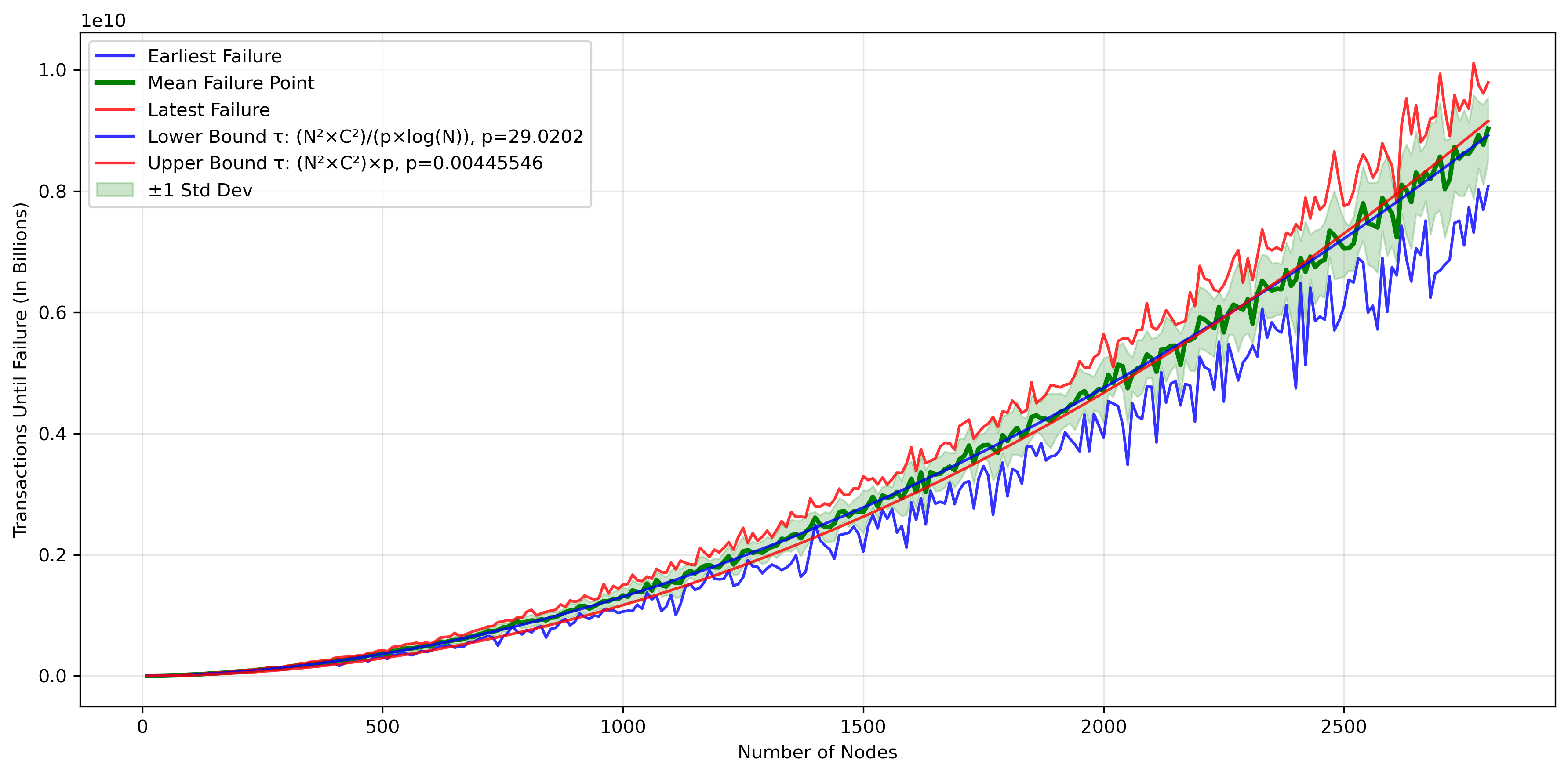}
    \caption{The payment throughput in a clique with fixed channel capacity
        of 512. Lines depict earliest (blue), average (green), and latest (red)
        failure points, with theoretical bounds (dashed).The green band
        indicates standard deviation, demonstrating how payment capacity
        scales non-linearly with network size.}
    \label{fig:Clique}
\end{figure}

\subsection{Ring topology \& Independent edges}\label{ssec:sim:ring}
In this subsection we consider a ring with $n$ nodes and a system of $n$
independent chains. On a graph with a ring topology, when we sample source and
destination of the payment u.a.r. among all nodes, it is easy to see that for
each $i = 1, 2, \dots, \lceil \frac{n}{2} \rceil - 1$, the probability that
source $u$ and destination $v$ are at distance $i$ is $\Prob{}{d(u, v) = i} =
{n}/\binom{n}{2} = \frac{2}{n-1}$. For a given distance $i$ and a given edge
$e$, the probability that $e$ is included in the chosen path conditional on the
fact that source and destination are at distance $i$ is $\Prob{}{e \;|\; d(u,
v) = i} = \frac{i}{n}$.

\begin{equation}\label{eq:probedgering}
    \Prob{}{e} = \sum_{i=1}^{\lceil \frac{n}{2} \rceil -1} \frac{2}{n-1}  \cdot \frac{i}{n}
    = \frac{(\lceil \frac{n}{2} \rceil -1)\lceil \frac{n}{2} \rceil}{n(n-1)} \simeq \frac{1}{4}
\end{equation}

On a ring topology (as well as in almost any other graph topologies) the update
of the balances of edges involved in the same path in a given round of the
random process are not independent, since the balances of all edges in the path
are updated in the same \textit{direction}: The balance of every edge $e$ in
the path decreases of one unit for the endpoint closer to the source and it
increases of one unit for the endpoint closer to the destination.  On the other
hand, on a system of $n$ independent chains, where at every round every chain
is selected with probability given by~\eqref{eq:probedgering} and each selected
chain updates its balance independently of the other chains, the expected
number of chains that update their balance equals the expected number of edges
that update their balance at each round in the ring, but the updates of the
balances in this system are completely independent.

Figure~\ref{fig:indp_cir} shows the results that we get when we fix the number
of nodes at $4096$, all edges have the same capacity, and the capacity of each
edge increases up to $3040$. For each value of the capacity we run $10$
simulations and we plot maximum, minimum and average first failure time over
the runs, for the case in which edges are disposed according to a ring topology
and for the case in which edges are independent. The plots show that in the
case of the ring topology the growth rate of the average is faster than the
growth rate of the independent case. It is also interesting to note that the
results for the ring topology exhibit a larger variance than the results for
the independent chains.

\begin{figure}[ht]
    \centering
    \includegraphics[width=\linewidth]{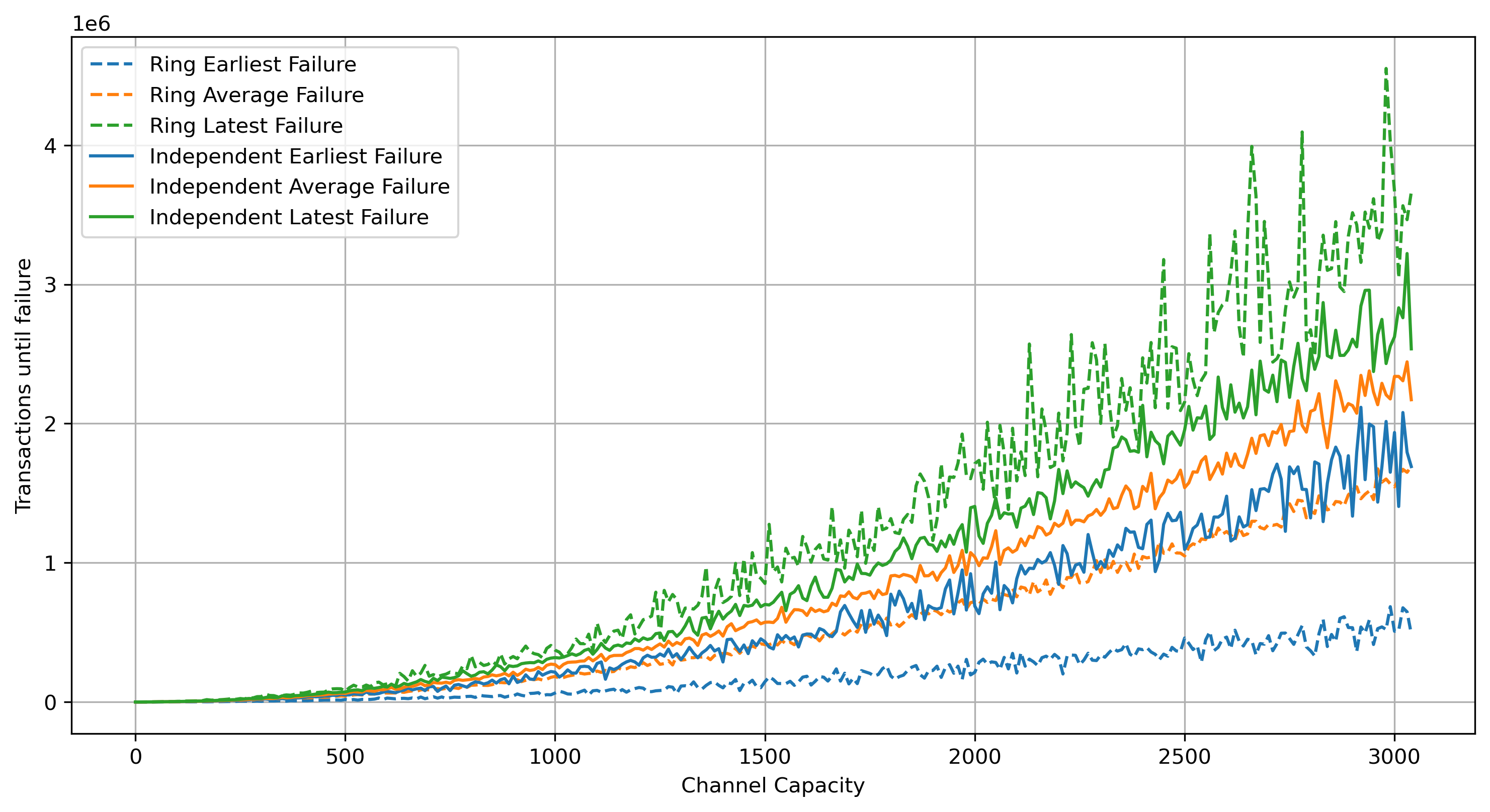}
    \caption{Simulated failure probabilities for ring and independent chain
        systems with $n=4096$ nodes and maximum capacity $3040$.}
    \label{fig:indp_cir}
\end{figure}

\subsection{Lightning Network}\label{ssec:sim:lnd}
In this subsection we present the results of the simulations of our random
process on a snapshot of the channel graph of the real Lightning Network, the
snapshot was taken on April 23rd, 2025 and it consists of $16426$ nodes and
$51063$ edges. Since the whole graph is not connected, in the simulation we
only consider the giant component, that consists in $14900$ nodes and $47087$
edges. Furthermore, as the average channel capacity of the network is about
$10$ million \textit{satoshi}\footnote{In the Lightning Network, channel
    capacities and payment values are usually expressed in \textit{satoshi} and
    \textit{millisatoshi}; recall that $1$ \textit{bitcoin} corresponds to
    $100$million \textit{satoshi}}, we use payment units of $1k$, $10k$, and $100k$
\textit{satoshi}. For each unit, we measure the number of payments between
randomly chosen source-destination pairs before the first payment failure occurs.
Table~\ref{tab:normal_Lightning} summarizes the results of the simulations,
recording maximum, minimum, average, and standard deviation over $10$ runs.

\begin{table}
    \centering
    \begin{tabular}{ccccc}
        \hline
        TX amount & min & max & mean & std  \\
        \hline
        1k        & 1   & 108 & 13   & 12   \\
        10k       & 1   & 30  & 4    & 3.63 \\
        100k      & 1   & 12  & 1    & 0.95 \\
        \hline
\end{tabular}
\caption{Time failure counts for the unmodified graph of the Lightning Network
with different payment units: $1k$, $10k$ and $100k$ satoshis, where each
capacity simulation was repeated $1000$ times.}
\label{tab:normal_Lightning}
\end{table}

In a second set of simulations, we consider the same Lightning Network graph
structure, but we redistribute the overall network capacity evenly over the
edges, so that every edge has the same capacity and the sum of the capacities
is preserved. Table~\ref{tab:uniform_Lightning} summarizes the results of this
set of simulations.
\begin{table}
    \centering
    \begin{tabular}{ccccc}
        \hline
        TX amount & min      & max       & mean      & std       \\
        \hline
        1k        & 74833576 & 968962509 & 439701719 & 204199218 \\
        10k       & 765772   & 9801570   & 3862534   & 1617777   \\
        100k      & 8084     & 115597    & 40089     & 17233     \\
        \hline
\end{tabular}
\caption{Time failure counts for the unmodified graph with uniform
redistribution capacity of edges with payment units: $1k$, $10k$, and $100k$
satoshis, where each capacity simulation was repeated $1000$ times.}
\label{tab:uniform_Lightning}
\end{table}

Finally, in a third set of simulations, we consider the same Lightning Network
graph structure, but we redistribute the overall network capacity in a way that
is \textit{optimized} with respect to our random process where we take source
and destination nodes uniformly at random over all the nodes: More precisely,
the overall network capacity is redistributed over the edges so that the ratio
$k_e^2 / g(e)$, the square of the capacity and the betweenness centrality, is
the same for every edge.
Table~\ref{tab:optimization} presents the results for the simulations with
payments of $100k$ satoshis. Running the simulations with payments of $10k$ and
$1k$ satoshis would take too long with such redistribution of the channel
capacity. A comparison of the results for the case of payments of $100k$
satoshis in the three sets of simulations is summarized in
Figure~\ref{fig:Distribution_of_transaction}: We can observe that the network
essentially cannot handle transactions of $100k$ satoshis with its original
capacity distribution; if we redistribute the overall capacity evenly over all
edges, then the network becomes more resistant with respect to failure;
finally, if we redistribute the overall capacity so the ratio $k_e^2 / g(e)$ is
equal for all edges, then there is a difference of almost $20$ times on
average, based on the results we showed in Table~\ref{tab:optimization} and
Table~\ref{tab:uniform_Lightning}. Clearly, since it is not possible to
organize and modify the capacity of the channels of the Lightning Network in a
centralized way, all the above redistributions are to be intended as thought
experiments.

\begin{table}
    \centering
    \begin{tabular}{cccccc}
        \hline
        TX amount & min & max     & mean   & std    \\
        \hline
        100k      & 841 & 2144747 & 782110 & 521689 \\
        \hline
    \end{tabular}
    \caption{Time failure counts for the unmodified graph with optimization on redistribution capacity of edges with payment unit: $100k$ satoshis, where the simulation was repeated 1000 times.}
    \label{tab:optimization}
\end{table}

\begin{figure}[ht]
    \centering
    \includegraphics[width=\linewidth]{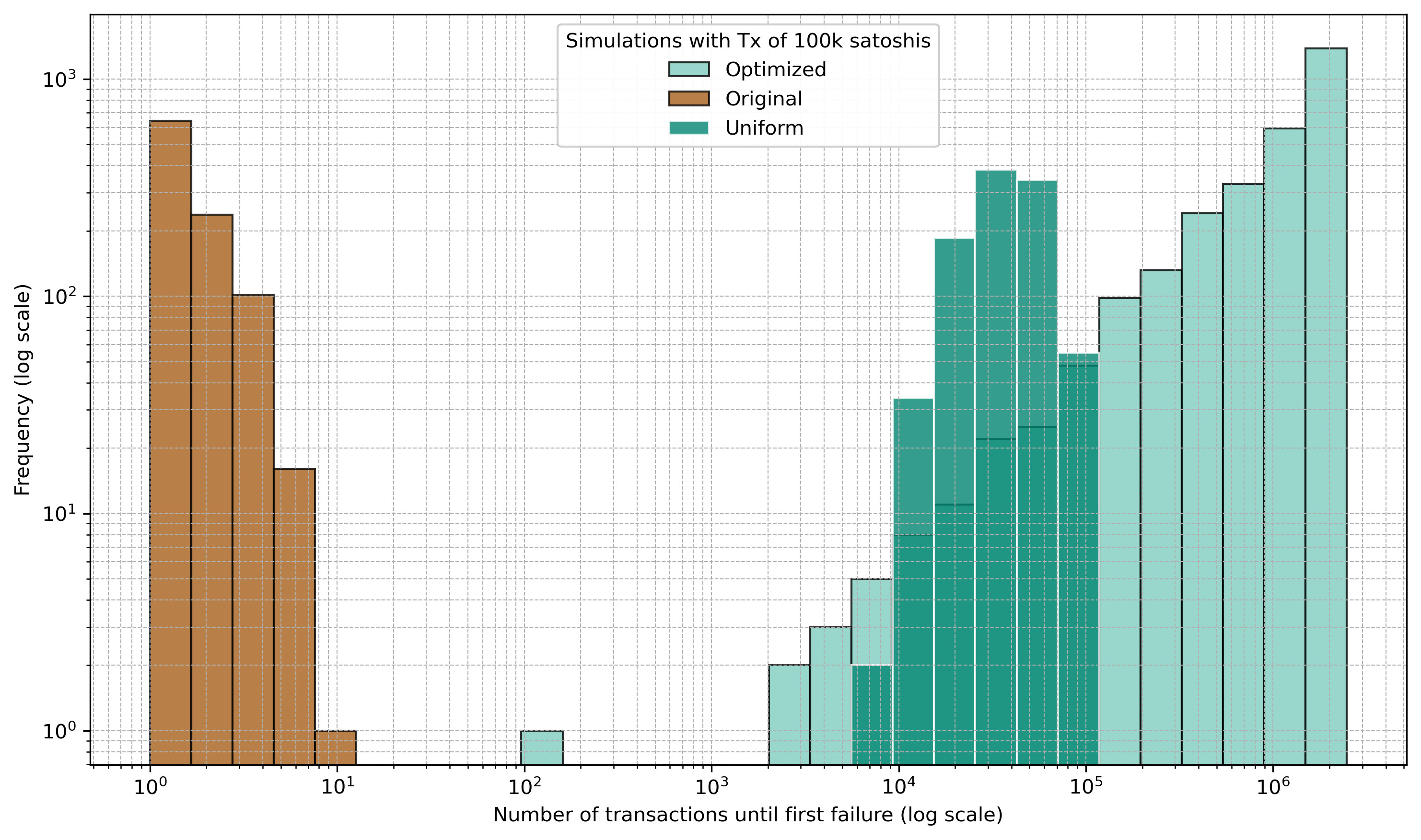}
    \caption{Log-log histogram of simulations for sending payments of value $100k$ satoshis with different distribution each simulation was repeated 1000 times}
    \label{fig:Distribution_of_transaction}
\end{figure}

\section{Conclusions}\label{sec:conclusions}

In this paper we studied a random process defined over undirected graphs
inspired by the problem of payment failures on the channel graph of the
Lightning Network.

We first analyzed upper and lower bounds on the time it takes to have the first
payment failure when the underlying graph is complete and all edges have the
same capacity, as a function of the number of nodes in the graph and of the
capacity of the edges. Our upper and lower bounds are almost tight, except for
a gap of $\log n$, that would be interesting to close. Our conjecture is that
the upper bound can be improved but the lower bound is tight.

In our random process, we pick source and destination of a payment uniformly at
random among all nodes in the graph, and we pick a path between source and
destination uniformly at random among all shortest paths between them. It
implies that, in general graphs, the larger the betweenness centrality of an
edge, the more often the edge appears in the selected paths. Since the number
of times an edge has to be selected before a payment failure occurs is
proportional to the square of the capacity, the critical quantity for the
random process is thus the ratio $k_e^2 / g(e)$ of the square of the capacity
of an edge and its betweenness centrality. For arbitrary graphs we thus give
upper and lower bounds on the time it takes to have a payment failure as
functions of the number of nodes and of the minimum over all the edges of such
ratios.

Our simulations on the complete graph validate the theoretical results and give
some empirical evidence that the theoretical lower bound might be tight. The
simulations on the ring give hints about the impact that the dependency among
the balance updates of the edges belonging to the same path has on the time it
takes for the first payment failure to occur. Finally, our simulations on the
channel graph of the Lightning Network quantify the impact of the distribution
of the channel capacity over the edges on the transaction failure rate.

\bibliographystyle{plain}
\bibliography{mls}

@book{antonopoulos2022mastering,
  title     = {Mastering the Lightning Network: A Second Layer Blockchain Protocol for Instant Bitcoin Payments},
  author    = {Antonopoulos, A.M. and Osuntokun, O. and Pickhardt, R.},
  url       = {https://github.com/lnbook/lnbook},
  year      = {2021},
  publisher = {O'Reilly}
}

@article{chernoff1952measure,
  title={A measure of asymptotic efficiency for tests of a hypothesis based on the sum of observations},
  author={Chernoff, Herman},
  journal={The Annals of Mathematical Statistics},
  pages={493--507},
  year={1952},
  publisher={JSTOR}
}

@inproceedings{croman2016scaling,
  title={On Scaling Decentralized Blockchains: (A Position Paper)},
  author={Croman, Kyle and Decker, Christian and Eyal, Ittay and Gencer, Adem Efe and Juels, Ari and Kosba, Ahmed and Miller, Andrew and Saxena, Prateek and Shi, Elaine and G{\"u}n Sirer, Emin and others},
  booktitle={International conference on financial cryptography and data security},
  pages={106--125},
  year={2016},
  organization={Springer}
}

@book{dubhashi2009concentration,
  title={Concentration of measure for the analysis of randomized algorithms},
  author={Dubhashi, Devdatt P. and Panconesi, Alessandro},
  year={2009},
  publisher={Cambridge University Press}
}

@book{levine2009markov,
  title={Markov Chains and Mixing Times},
  author={Levin, David A. and Peres, Yuval and Wilmer, Elizabeth L.},
  publisher={American Mathematical Society, Providence},
  year={2009}
}

@book{mitzenmacher2017probability,
  title={Probability and computing: Randomization and probabilistic techniques in algorithms and data analysis},
  author={Mitzenmacher, Michael and Upfal, Eli},
  year={2017},
  publisher={Cambridge university press}
}

@misc{nakamoto2008bitcoin,
  title        = {Bitcoin: A peer-to-peer electronic cash system},
  author       = {Nakamoto, Satoshi},
  year         = {2008},
  howpublished = {\url{https://bitcoin.org/bitcoin.pdf}}
}

@unpublished{poon2016bitcoin,
  title  = {The Bitcoin Lightning Network: Scalable Off-Chain Instant Payments},
  author = {Poon, Joseph and Dryja, Thaddeus},
  year   = {2016},
  note   = {White paper},
  url    = {https://lightning.network/lightning-network-paper.pdf}
}

@inproceedings{OUSPS,
  TITLE = {{Optimal Uniform Shortest Path Sampling}},
  AUTHOR = {Dreyer, Simon and Genitrini, Antoine and Naima, Mehdi},
  URL = {https://hal.science/hal-04669060},
  BOOKTITLE = {{WALCOM: Algorithms and Computation}},
  ADDRESS = {Chengdu, China},
  PUBLISHER = {{Springer Nature Singapore}},
  SERIES = {Lecture Notes in Computer Science},
  VOLUME = {15411},
  PAGES = {160-179},
  YEAR = {2025},
  MONTH = Feb,
  DOI = {10.1007/978-981-96-2845-2\_11},
  PDF = {https://hal.science/hal-04669060v3/file/main_walcom.pdf},
  HAL_ID = {hal-04669060},
  HAL_VERSION = {v3},
}

@article{girvan2002community,
  title={Community structure in social and biological networks},
  author={Girvan, Michelle and Newman, Mark EJ},
  journal={Proceedings of the national academy of sciences},
  volume={99},
  number={12},
  pages={7821--7826},
  year={2002},
  publisher={The National Academy of Sciences}
}

@inproceedings{lange2021impact,
  title={On the impact of attachment strategies for payment channel networks},
  author={Lange, Kimberly and Rohrer, Elias and Tschorsch, Florian},
  booktitle={2021 IEEE International Conference on Blockchain and Cryptocurrency (ICBC)},
  pages={1--9},
  year={2021},
  organization={IEEE}
}

@article{MBI,
author = {Bergamini, Elisabetta and Crescenzi, Pierluigi and D'Angelo, Gianlorenzo and Meyerhenke, Henning and Severini, Lorenzo and Velaj, Yllka},
title = {Improving the Betweenness Centrality of a Node by Adding Links},
year = {2018},
issue_date = {2018},
publisher = {Association for Computing Machinery},
address = {New York, NY, USA},
volume = {23},
issn = {1084-6654},
url = {https://doi.org/10.1145/3166071},
doi = {10.1145/3166071},
journal = {ACM J. Exp. Algorithmics},
month = aug,
articleno = {1.5},
numpages = {32},
}

@misc{lnd_github,
  author       = {Lightning Labs},
  title        = {lnd: Lightning Network Daemon},
  year         = {2025},
  howpublished = {\url{https://github.com/lightningnetwork/lnd}},
}

@inproceedings{ersoy2020profit,
author = {Ersoy, O\u{g}uzhan and Roos, Stefanie and Erkin, Zekeriya},
title = {How to Profit from Payments Channels},
pages = {284–303},
numpages = {20},
booktitle = {Financial Cryptography and Data Security: 24th International Conference, FC 2020, Kota Kinabalu, Malaysia, February 10–14, 2020, Revised Selected Papers},
year = {2020},
publisher = {Springer-Verlag},
location = {Kota Kinabalu, Malaysia},
isbn = {978-3-030-51279-8},
address = {Berlin, Heidelberg},
doi = {10.1007/978-3-030-51280-4_16},
numpages = {20},
}

@inproceedings{Dandekar2011liquidity,
author = {Dandekar, Pranav and Goel, Ashish and Govindan, Ramesh and Post, Ian},
title = {Liquidity in credit networks: a little trust goes a long way},
year = {2011},
isbn = {9781450302616},
publisher = {Association for Computing Machinery},
address = {New York, NY, USA},
url = {https://doi.org/10.1145/1993574.1993597},
doi = {10.1145/1993574.1993597},
abstract = {Credit networks represent a way of modeling trust between entities in a network. Nodes in the network print their own currency and trust each other for a certain amount of each other's currency. This allows the network to serve as a decentralized payment infrastructure---arbitrary payments can be routed through the network by passing IOUs between trusting nodes in their respective currencies---and obviates the need for a common currency. Nodes can repeatedly transact with each other and pay for the transaction using trusted currency. A natural question to ask in this setting is: how long can the network sustain liquidity, i.e. how long can the network support the routing of payments before credit dries up? We answer this question in terms of the long term failure probability of transactions for various network topologies and credit values.We prove that the transaction failure probability is independent of the path along which transactions are routed. We show that under symmetric transaction rates, the transaction failure probability in a number of well-known graph families goes to zero as the size, density or credit capacity of the network increases. We also show via simulations that even networks of small size and credit capacity can route transactions with high probability if they are well-connected. Further, we characterize a centralized currency system as a special type of a star network (one where edges to the root have infinite credit capacity, and transactions occur only between leaf nodes) and compute the steady-state transaction failure probability in a centralized system. We show that liquidity in star networks, complete graphs and Erdos-Renyi networks is comparable to that in equivalent centralized currency systems; thus we do not lose much liquidity in return for their robustness and decentralized properties.},
booktitle = {Proceedings of the 12th ACM Conference on Electronic Commerce},
pages = {147–156},
numpages = {10},
keywords = {credit networks, currency, liquidity, reputation, trust},
location = {San Jose, California, USA},
series = {EC '11}
}

@inproceedings{Ghosh2007Mechanism,
author = {Ghosh, Arpita and Mahdian, Mohammad and Reeves, Daniel M. and Pennock, David M. and Fugger, Ryan},
title = {Mechanism design on trust networks},
year = {2007},
isbn = {3540771042},
publisher = {Springer-Verlag},
address = {Berlin, Heidelberg},
abstract = {We introduce the concept of a trust network--a decentralized payment infrastructure in which payments are routed as IOUs between trusted entities. The trust network has directed links between pairs of agents, with capacities that are related to the credit an agent is willing to extend another; payments may be routed between any two agents that are connected by a path in the network. The network structure introduces group budget constraints on the payments from a subset of agents to another on the trust network: this generalizes the notion of individually budget constrained bidders.We consider a multi-unit auction of identical items among bidders with unit demand, when the auctioneer and bidders are all nodes on a trust network. We define a generalized notion of social welfare for such budget-constrained bidders, and show that the winner determination problem under this notion of social welfare is NP-hard; however the flow structure in a trust network can be exploited to approximate the solution with a factor of 1 -1/e. We then present a pricing scheme that leads to an incentive compatible, individually rational mechanism with feasible payments that respect the trust network's payment constraints and that maximizes the modified social welfare to within a factor 1-1/e.},
booktitle = {Proceedings of the 3rd International Conference on Internet and Network Economics},
pages = {257–268},
numpages = {12},
location = {San Diego, CA, USA},
series = {WINE'07}
}

@INPROCEEDINGS{Defigueiredo2005trustdavis,
  author={DeFigueiredo, Dd.B. and Barr, E.T.},
  booktitle={Seventh IEEE International Conference on E-Commerce Technology (CEC'05)}, 
  title={TrustDavis: a non-exploitable online reputation system}, 
  year={2005},
  volume={},
  number={},
  pages={274-283},
  keywords={Feedback;Resists;Computer science;Insurance;Social network services;Upper bound;Aggregates;Internet;Engineering profession;Protection},
  doi={10.1109/ICECT.2005.98}
}

@article{
Lain1978Lsquares,
author = {T. L. Lai  and Herbert Robbins  and C. Z. Wei },
title = {Strong consistency of least squares estimates in multiple regression},
journal = {Proceedings of the National Academy of Sciences},
volume = {75},
number = {7},
pages = {3034-3036},
year = {1978},
doi = {10.1073/pnas.75.7.3034},
URL = {https://www.pnas.org/doi/abs/10.1073/pnas.75.7.3034},
eprint = {https://www.pnas.org/doi/pdf/10.1073/pnas.75.7.3034},
}

@article{Rene2021Security,
  author       = {Rene Pickhardt and
                  Sergei Tikhomirov and
                  Alex Biryukov and
                  Mariusz Nowostawski},
  title        = {Security and Privacy of Lightning Network Payments with Uncertain
                  Channel Balances},
  journal      = {CoRR},
  volume       = {abs/2103.08576},
  year         = {2021},
  url          = {https://arxiv.org/abs/2103.08576},
  eprinttype    = {arXiv},
  eprint       = {2103.08576},
  timestamp    = {Tue, 23 Mar 2021 16:29:47 +0100},
  biburl       = {https://dblp.org/rec/journals/corr/abs-2103-08576.bib},
  bibsource    = {dblp computer science bibliography, https://dblp.org}
}

@article{Goel2020ContinuousCN,
  title={Continuous Credit Networks and Layer 2 Blockchains: Monotonicity and Sampling},
  author={Ashish Goel and Geoffrey Ramseyer},
  journal={Proceedings of the 21st ACM Conference on Economics and Computation},
  year={2020},
  url={https://api.semanticscholar.org/CorpusID:215754249}
}

@article{dotan2021survey,
  title={Survey on blockchain networking: Context, state-of-the-art, challenges},
  author={Dotan, Maya and Pignolet, Yvonne-Anne and Schmid, Stefan and Tochner, Saar and Zohar, Aviv},
  journal={ACM Computing Surveys (CSUR)},
  volume={54},
  number={5},
  pages={1--34},
  year={2021},
  publisher={ACM New York, NY, USA}
}

\appendix
\section*{Appendix}

%\begin{center}
%\begin{Huge}
%\textbf{Appendix}
%\end{Huge}
%\end{center}

\section{Concentration inequalities}\label{sec:apx:concentration}
In this section, we state a fundamental concentration bound that we used in the
proofs in Sections~\ref{sec:clique} and~\ref{sec:betweenness}. We refer the
reader interested in a simple proof to~\cite{dubhashi2009concentration}
or~\cite{mitzenmacher2017probability}.

\begin{lemma}[Chernoff bound~\cite{chernoff1952measure}]\label{apx:chernoff}
    Let $\{X_i \;:\, i = 1, \dots, n\}$ be a family of independent Bernoulli
    random variables with $\Prob{}{X_i = 1} = p_i$ and let $X = \sum_{i=1}^n X_i$.
    Then, for every $0 < \delta < 1$ it holds that
    \begin{itemize}
        \item $\Prob{}{X \leqslant (1-\delta)\mu} \leqslant e^{-\frac{\delta^2}{2} \mu}$
        \item $\Prob{}{X \geqslant (1+\delta)\mu} \leqslant e^{-\frac{\delta^2}{3} \mu}$
    \end{itemize}
    where $\mu = \Expec{}{X} = \sum_{i = 1}^n p_i$.
\end{lemma}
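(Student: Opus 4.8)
The plan is to prove both tail bounds by the exponential moment method (Chernoff's original technique), treating the two inequalities in parallel and letting them diverge only at the final scalar estimate. First I would compute the moment generating function of $X$. By independence, $\Expec{}{e^{sX}} = \prod_{i=1}^n \Expec{}{e^{sX_i}}$, and for each Bernoulli summand $\Expec{}{e^{sX_i}} = p_i e^s + (1-p_i) = 1 + p_i(e^s - 1) \leqslant e^{p_i(e^s - 1)}$, using the elementary inequality $1 + x \leqslant e^x$. Taking the product over $i$ and recalling $\mu = \sum_i p_i$ yields the clean bound $\Expec{}{e^{sX}} \leqslant e^{\mu(e^s - 1)}$, valid for every real $s$.

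For the upper tail I would apply Markov's inequality to the nonnegative variable $e^{sX}$ with $s > 0$, giving $\Prob{}{X \geqslant (1+\delta)\mu} \leqslant e^{-s(1+\delta)\mu}\,\Expec{}{e^{sX}} \leqslant \exp\!\big(\mu\,(e^s - 1 - s(1+\delta))\big)$. Minimizing the exponent over $s$ gives $s = \ln(1+\delta)$ and the standard multiplicative form $\big(e^\delta / (1+\delta)^{1+\delta}\big)^\mu$. For the lower tail I would instead apply Markov to $e^{-sX}$ with $s > 0$, which by the same computation (using $\Expec{}{e^{-sX}} \leqslant e^{\mu(e^{-s}-1)}$) leads to $\exp\!\big(\mu\,(e^{-s} - 1 + s(1-\delta))\big)$; here the optimum is $s = -\ln(1-\delta) > 0$, producing $\big(e^{-\delta}/(1-\delta)^{1-\delta}\big)^\mu$.

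The final step, and the one requiring the most care, is to convert these raw ratios into the advertised exponential form. For the lower tail it suffices to show $f(\delta) := -\delta - (1-\delta)\ln(1-\delta) \leqslant -\delta^2/2$ on $(0,1)$; substituting $\ln(1-\delta) = -\sum_{k \geqslant 1}\delta^k/k$ one finds $f(\delta) = -\sum_{j \geqslant 2}\delta^j/(j(j-1))$, whose leading term is exactly $-\delta^2/2$ and whose remaining terms are all negative, so the bound is immediate. For the upper tail the alternating signs in the expansion of $\ln(1+\delta)$ make a term-by-term argument awkward, so I would instead set $g(\delta) := \delta - (1+\delta)\ln(1+\delta) + \delta^2/3$ and show $g \leqslant 0$ on $(0,1)$ by noting $g(0) = g'(0) = 0$ and analyzing the sign of $g''(\delta) = 2/3 - 1/(1+\delta)$, which is negative on $(0,\tfrac12)$ and positive on $(\tfrac12,1)$. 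I expect this last calculus estimate to be the main obstacle: the constant $1/3$ (rather than the $1/2$ of the lower tail) is forced by the behavior of $g$ near $\delta = 1$, so the bound cannot be read off from a single truncated term and genuinely requires the monotonicity argument to conclude $g' \leqslant 0$, hence $g \leqslant 0$, across the whole interval.
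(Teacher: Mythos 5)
Your proof is correct, and it is essentially the canonical argument: the paper itself gives no proof of this lemma, deferring instead to the textbooks of Dubhashi--Panconesi and Mitzenmacher--Upfal, and the exponential-moment method you carry out (MGF bound via $1+x \leqslant e^x$, Markov applied to $e^{\pm sX}$, optimization at $s = \pm\ln(1\pm\delta)$, then the calculus step converting the multiplicative form to the $\delta^2/2$ and $\delta^2/3$ exponents) is precisely the proof found in those references. The only detail worth pinning down in your sketch is the endpoint check in the upper-tail estimate: since $g'$ decreases on $(0,\tfrac12)$ and increases on $(\tfrac12,1)$, concluding $g' \leqslant 0$ on all of $(0,1)$ requires verifying $g'(1) = \tfrac{2}{3} - \ln 2 < 0$, which holds because $\ln 2 \approx 0.693 > \tfrac{2}{3}$.
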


\section{Preliminaries on birth-and-death chains}\label{sec:apx:bdchain}
In this section we briefly recall some terminology and fundamental bounds about
\textit{birth-and-death} Markov chains that we used in
Sections~\ref{sec:clique} and~\ref{sec:betweenness}.

A finite birth-and-death chain with absorbing boundary is a Markov chain
$\{X_t\,:\, t \in \mathbb{N}\}$ with state space
\[
\Omega = \{-k, -k+1, \dots, -1, 0, 1, \dots, k-1, k\}
\]
for some $k \in \mathbb{N}$, where 
\[
\Prob{}{X_{t+1} = -k \,|\, X_t = -k} = \Prob{}{X_{t+1} = k \,|\, X_t = k} = 1 \\
\]
and for each $j = -k+1, \dots, -1, 0, 1, \dots, k-1$,
\[
\begin{array}{lcl}
  \Prob{}{X_{t+1} = j + 1 \mid X_t = j}  =  p \\
  \Prob{}{X_{t+1} = j - 1 \mid X_t = j}  = q \\
  \Prob{}{X_{t+1} = j \mid X_t = j} =  1 - (p + q)
\end{array}
\]
for some non-negative $p$ and $q$ with $p + q < 1$. When $p = q = 1/2$ the
chain is said to be \textit{unbiased}.

\begin{center}
    \begin{small}
        \begin{tikzpicture}[scale=0.7]
            \node[circle,draw,color=blue,label=above:\textcolor{blue}{\small{$0$}}] (sj) at (0,0) {};
            \node[circle,draw,color=blue,label=above:\textcolor{blue}{\small{$1$}}] (sj+1) at (2,0) {};
            \node[circle,draw,color=blue,label=above:\textcolor{blue}{\small{$-1$}}] (sj-1) at (-2,0) {};
            \node[color=blue] (sdot) at (4,0) {$\dots\dots$};
            \node[circle,draw,color=blue,label=below:\textcolor{blue}{\small{$k$}}] (sn) at (6,0) {};
            \node[color=blue] (s-dot) at (-4,0) {$\dots\dots$};
            \node[circle,draw,color=blue,label=below:\textcolor{blue}{\small{$-k$}}] (s0) at (-6,0) {};
            \draw[->,>=stealth] (sj) edge [bend left] node[above] {$1/2$} (sj+1)
            (sj-1) edge [bend left] (sj);
            \draw[->,>=stealth] (sj+1) edge [bend left] (sdot)
            (sj+1) edge [bend left] (sj);
            \draw[->,>=stealth] (sj-1) edge [bend left](s-dot)
            (sj) edge [bend left] node[below] {$1/2$} (sj-1);
            \draw[->,>=stealth] (sn) edge [loop above] node[above] {$1$} (sn)
            (s-dot) edge [bend left] (s0)
            (s0) edge [loop above] node[above] {$1$} (s-dot)
            (sdot) edge [bend left] (sn);
        \end{tikzpicture}
    \end{small}
\end{center}

For unbiased birth-and-death chains it is well known that the expected time
before the chain hits the boundary starting from an arbitrary state $j$ is $k^2
- j^2$ (see, e.g., Chapter~2 in~\cite{levine2009markov}).
\begin{lemma}\label{apx:bdlemma}
    Let $\tau = \inf \{t \in \mathbb{N} \,:\, X_t = \pm k \}$ be the random variable
    indicating the first time in which the chain hits state $k$. The
    expectation of $\tau$ for the chain starting at an arbitrary state $j \in \Omega$ is
    \[
        \Expec{}{\tau \,|\, X_0 = j} = k^2 - j^2
    \]
\end{lemma}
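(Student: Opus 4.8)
The plan is to use first-step analysis to set up a linear recurrence for the expected hitting time and then solve it explicitly. Write $h(j) = \Expec{}{\tau \mid X_0 = j}$ for the expected time to absorption starting from state $j \in \Omega$. Because the state space is finite and the absorbing boundary $\{-k, k\}$ is reachable from every interior state, the chain is absorbed almost surely and in fact $h(j) < \infty$ for every $j$; I would record this preliminary fact first, since it is what makes all the subsequent manipulations of finite quantities legitimate.

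With finiteness in hand, conditioning on the first step of the chain from an interior state $j \in \{-k+1, \dots, k-1\}$ and using $p = q = 1/2$ (so there is no holding probability) yields
\begin{equation*}
h(j) = 1 + \tfrac{1}{2}\, h(j+1) + \tfrac{1}{2}\, h(j-1),
\end{equation*}
together with the boundary conditions $h(-k) = h(k) = 0$. Rearranging gives the discrete Poisson equation $h(j+1) - 2h(j) + h(j-1) = -2$: the second difference of $h$ is the constant $-2$.

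Solving this recurrence is routine. A particular solution is $h(j) = -j^2$ (whose second difference is indeed $-2$), and the homogeneous equation $h(j+1) - 2h(j) + h(j-1) = 0$ has general solution $Aj + B$, so $h(j) = -j^2 + Aj + B$. Imposing $h(k) = 0$ and $h(-k) = 0$ forces $A = 0$ and $B = k^2$, whence $h(j) = k^2 - j^2$, as claimed. The symmetry $h(j) = h(-j)$, forced by the symmetry of the unbiased chain, is a useful consistency check and immediately gives $A = 0$.

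The only genuine subtlety, and the step I would be most careful about, is the a priori finiteness of $h(j)$; everything else is an elementary linear recurrence. An alternative, equally short route sidesteps the hand-written recurrence: the process $M_t = X_t^2 - t$ is a martingale, since on interior states $\Expec{}{X_{t+1}^2 \mid X_t = j} = \tfrac{1}{2}(j+1)^2 + \tfrac{1}{2}(j-1)^2 = j^2 + 1$. Applying the optional stopping theorem at time $\tau$ (justified because $X_t$ is bounded in $[-k, k]$ and $\Expec{}{\tau} < \infty$) gives $\Expec{}{X_\tau^2 - \tau} = j^2$; since $X_\tau^2 = k^2$, this rearranges to $\Expec{}{\tau} = k^2 - j^2$.
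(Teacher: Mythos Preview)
Your proof is correct; both the first-step recurrence and the optional-stopping argument are the standard textbook routes to this identity. The paper, however, does not actually prove this lemma: it simply states the result as well known and refers the reader to Chapter~2 of~\cite{levine2009markov}, so there is no in-paper argument to compare your proposal against.
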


To bound the probability that a finite chain hits the boundary before or after
a certain value, it is sometimes convenient to consider infinite
birth-and-death chains. The following Lemma~\ref{lemma:hittpos} is a well-known
bound (see, e.g., Exercise~2.10 in~\cite{levine2009markov}) for infinite chains
that can be used to give an upper bound on the probability that a chain hits
the boundary within a certain number of time steps, as shown in
Lemma~\ref{lemma:lbhitlemma}.

For an event $\calE$ and a state $j \in \Omega$ we use the standard notation
$\Prob{j}{\calE}$ for the probability of $\calE$ conditional on the event that
the Markov chains $\{X_t\}$ starts at state $j$,
\[
\Prob{j}{\calE} = \Prob{}{\calE \;|\; X_0 = j} \,.
\]

\begin{lemma}\label{lemma:hittpos}
Let $\{X_t \,:\, t \in \mathbb{N} \}$ be an unbiased birth-and-death with state
space $\Omega = \mathbb{Z}$. For every time $t \in \mathbb{N}$ and for every
\textit{target state} $k \in \mathbb{N}$ it holds that
\[
\Prob{0}{|X_t| \geqslant k}
\leqslant \Prob{0}{\max_{i = 1, \dots, t} |X_i| \geqslant k}
\leqslant 2 \Prob{0}{|X_t| \geqslant k}
\]
\end{lemma}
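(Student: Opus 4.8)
The plan is to prove Lemma~\ref{lemma:hittpos} by establishing the two inequalities separately. The left inequality is immediate: the event $\{|X_t| \geqslant k\}$ is contained in the event $\{\max_{i=1,\dots,t}|X_i| \geqslant k\}$, since if the walk is at distance $\geqslant k$ at the final time $t$, then in particular the running maximum of $|X_i|$ over $i=1,\dots,t$ is at least $k$. So the content of the lemma is really the right-hand inequality, a reflection-principle bound, and that is where I would focus.

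The core tool is a reflection argument. Fix the target $k \in \mathbb{N}$ and let $T = \inf\{i \geqslant 1 : |X_i| \geqslant k\}$ be the first time the walk reaches distance $k$ from the origin; then $\{\max_{i=1,\dots,t}|X_i| \geqslant k\} = \{T \leqslant t\}$. **First I would** decompose this event according to the value of $T$ and the state $X_T$ (which, for an unbiased $\pm 1$ walk started at $0$, is exactly $+k$ or $-k$ at the first passage since it moves in unit steps). By symmetry of the unbiased chain it suffices to analyze the event $\{T \leqslant t,\ X_T = k\}$ and double it. Conditioned on $T = s$ and $X_T = k$ for some $s \leqslant t$, the walk restarted at time $s$ from state $k$ is again an unbiased symmetric walk, so by the symmetry $X_{s+j} \mapsto 2k - X_{s+j}$ it is equally likely to be at $\geqslant k$ or $\leqslant k$ at time $t$; in particular $\Prob{}{X_t \geqslant k \mid T = s, X_T = k} \geqslant 1/2$. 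Summing over $s$ gives
\[
\Prob{0}{T \leqslant t,\ X_T = k} \leqslant 2\,\Prob{0}{X_t \geqslant k},
\]
and the symmetric statement with $-k$ gives $\Prob{0}{T\leqslant t,\ X_T=-k}\leqslant 2\,\Prob{0}{X_t\leqslant -k}$. Adding the two and using $\Prob{0}{X_t\geqslant k}+\Prob{0}{X_t\leqslant -k}=\Prob{0}{|X_t|\geqslant k}$ yields $\Prob{0}{\max_i |X_i|\geqslant k}\leqslant 2\,\Prob{0}{|X_t|\geqslant k}$, as claimed.

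**The main obstacle** I anticipate is handling the lazy steps cleanly: the chain in the paper has $\Prob{}{X_{t+1}=j}=1-(p+q)$, so with $p=q=1/2$ it is in fact non-lazy, but if one wants to keep the argument robust to laziness the first-passage state $X_T$ need not be exactly $\pm k$ — although for unit-step unbiased walks it is. I would therefore state the argument for the genuinely unbiased non-lazy $\pm 1$ walk on $\mathbb{Z}$ (matching the hypothesis), where $X_T \in \{+k,-k\}$ holds deterministically, so the reflection at the first-passage level $k$ (resp. $-k$) is exact. The remaining care is purely measure-theoretic: justifying the conditioning by partitioning $\{T\leqslant t\}$ into the disjoint events $\{T=s,\,X_T=\pm k\}$ for $s=1,\dots,t$ and invoking the strong Markov property at the stopping time $T$, which holds since $T$ is a bounded-below stopping time and the increments after $T$ are independent of $\mathcal{F}_T$. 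Since this is a standard reflection bound (cf. Exercise~2.10 in~\cite{levine2009markov}), I would keep the write-up to the decomposition, the symmetry step giving the factor of $1/2$, and the summation.
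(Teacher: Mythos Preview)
The paper does not actually prove Lemma~\ref{lemma:hittpos}; it states it as a well-known bound and refers the reader to Exercise~2.10 in~\cite{levine2009markov}. Your reflection-principle argument is the standard proof of this inequality and is correct: the decomposition over the first-passage time $T$ and the sign of $X_T$, together with the strong Markov property and the symmetry of the post-$T$ increments giving the factor $1/2$, is exactly how the cited exercise is meant to be solved. Your side remark on laziness is unnecessary here (with $p=q=1/2$ the walk is non-lazy, so $X_T\in\{+k,-k\}$ deterministically, as you note), and ``bounded-below stopping time'' is an odd phrase---the strong Markov property for discrete-time chains holds for any stopping time---but neither point affects the validity of the argument.
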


The following lemma is used in the proofs of the lower bounds in
Theorems~\ref{thm:clique} and~\ref{thm:beetweenness}.

\begin{lemma}\label{lemma:lbhitlemma}
Let $\{X_t \,:\, t \in \mathbb{N} \}$ be an unbiased birth-and-death chain with
state space $\Omega = \mathbb{Z}$, let $k \in \mathbb{N}$ be an integer and let
$\tau$ be the random variable indicating the first time the chain hits state
$\pm k$,
\[
\tau = \inf\{t \in \mathbb{N} \,:\, |X_t| = k \}\,.
\]
Then, for every $t \in \mathbb{N}$ it holds that
\[
\Prob{0}{\tau \leqslant t} \leqslant 4 e^{-k^2 / (6t)}
\]
\end{lemma}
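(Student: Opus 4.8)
The plan is to derive the bound by combining the maximal inequality already recorded in Lemma~\ref{lemma:hittpos} with the Chernoff bound of Lemma~\ref{apx:chernoff}, after first rephrasing the hitting event $\{\tau \leqslant t\}$ as an event about the running maximum of $|X_i|$. The starting observation is that, since the walk starts at $0$ and moves by steps of $\pm 1$, the quantity $|X_i|$ changes by at most one per step and therefore cannot jump over the level $k$: the first time $|X_i|$ reaches $k$ coincides with the first time $|X_i| \geqslant k$. Consequently the events coincide,
\[
\{\tau \leqslant t\} = \left\{ \max_{i = 1, \dots, t} |X_i| \geqslant k \right\},
\]
and applying the right-hand inequality of Lemma~\ref{lemma:hittpos} immediately gives $\Prob{0}{\tau \leqslant t} \leqslant 2\,\Prob{0}{|X_t| \geqslant k}$. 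This reduces the problem from a first-passage statement (which would otherwise require the reflection principle or generating-function arguments) to a tail bound on the single marginal $X_t$.

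Next I would bound $\Prob{0}{|X_t| \geqslant k}$ using Lemma~\ref{apx:chernoff}. Writing $X_t = 2S_t - t$, where $S_t$ is a sum of $t$ independent $\{0,1\}$ Bernoulli$(1/2)$ variables with mean $\mu = \Expec{}{S_t} = t/2$, the event $\{X_t \geqslant k\}$ becomes $\{S_t \geqslant \mu(1 + k/t)\}$ and the event $\{X_t \leqslant -k\}$ becomes $\{S_t \leqslant \mu(1 - k/t)\}$. Taking $\delta = k/t$ in the two tails of Lemma~\ref{apx:chernoff} yields $e^{-\delta^2\mu/3} = e^{-k^2/(6t)}$ for the upper tail and $e^{-\delta^2\mu/2} = e^{-k^2/(4t)} \leqslant e^{-k^2/(6t)}$ for the lower tail; a union bound then gives $\Prob{0}{|X_t| \geqslant k} \leqslant 2 e^{-k^2/(6t)}$. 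This is exactly where the constant $6$ in the statement originates, so the choice of $\delta$ and the mean $\mu = t/2$ should be presented carefully. Combining with the previous paragraph produces $\Prob{0}{\tau \leqslant t} \leqslant 4 e^{-k^2/(6t)}$, as claimed.

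The only genuinely delicate point — and the step I expect to be the main (minor) obstacle — is that Lemma~\ref{apx:chernoff} requires $0 < \delta < 1$, i.e. $k < t$. I would dispatch the complementary regime $k \geqslant t$ separately: if $k > t$ the walk simply cannot reach $\pm k$ in $t$ steps, so $\Prob{0}{\tau \leqslant t} = 0$, and if $k = t$ the probability is $2^{1-t}$, which is comfortably below $4e^{-t/6}$; in both cases the asserted inequality holds trivially. Everything else is a routine verification, so the bulk of the writeup is simply chaining the two cited lemmas together with the Bernoulli substitution.
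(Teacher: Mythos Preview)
Your proposal is correct and follows essentially the same route as the paper: rewrite $\{\tau\leqslant t\}$ as a running-maximum event, apply Lemma~\ref{lemma:hittpos}, express $X_t$ via a Binomial$(t,1/2)$ variable, and finish with Lemma~\ref{apx:chernoff}. The only cosmetic difference is that the paper invokes symmetry to reduce to a single upper tail before applying Chernoff, whereas you bound the two tails separately and union them; you are also more careful than the paper in explicitly handling the $k\geqslant t$ regime needed for the $0<\delta<1$ hypothesis.
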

\begin{proof}
Observe that, the chain hits states $\pm k$ within time $t$ if and only if the
maximum over all times $i$ up to $t$ of $|X_{i}|$ is at least $k$,
\begin{align*}
\Prob{0}{\tau \leqslant t}
& = \Prob{0}{\max_{i = 1, \dots, t} |X_i| \geqslant k}
\leqslant 2 \Prob{0}{|X_t| \geqslant k}                 
= 4 \Prob{0}{X_t \geqslant k}
\end{align*}
where in the second inequality we used Lemma~\ref{lemma:hittpos} and in the
third equality we used the symmetry of the process, i.e., $\Prob{0}{X_t
\geqslant +k} = \Prob{0}{X_t \leqslant -k}$.

Now observe that, since the state of the chain at each round increases of one
unit with probability $1/2$ and decreases of one unit with probabiity $1/2$,
then the probability that the state of the chain at time $t$ is at least $k$
equals the probability that in a sequence of $t$ unbiased coin tosses at least
$(t + k) / 2$ ended up head,
\[
\Prob{0}{X_t \geqslant k} = \Prob{}{B(t,1/2) \geqslant \frac{k+t}{2}}
\]
where with $B(t,1/2)$ we indicated a binomial random variable with parameters
$t$ and $1/2$. The thesis then follows from the Chernoff bound (see
Lemma~\ref{apx:chernoff}).
\end{proof}

% \begin{center}
% \rule{5cm}{.5pt}
% \end{center}

%\input{./trunk/apx-furthersim}

\end{document}